\documentclass[dvipdfm,12pt]{article}
\usepackage[dvips]{graphicx}
\usepackage[usenames,dvipsnames]{color}
\usepackage[dvipdfm]{hyperref}
\usepackage{amsmath,amsthm,amssymb,euscript,amscd}

\topmargin 0.0cm
\oddsidemargin 0.2cm
\textwidth 16cm 
\textheight 21cm
\footskip 1.0cm  
%\@addtoreset{equation}{section} 
%\@addtoreset{equation}{section}
%  \renewcommand{\theequation}{\thesection.\arabic{equation}} 
%\@addtoreset{equation}{section}
%\def\theequation{\thesection.\arabic{equation}}

\numberwithin{equation}{section} 

\title{\bf h-analogue of Fibonacci Numbers} 
\author{H.B. Benaoum\footnote{Email : hbenaoum@pmu.edu.sa, hbenaoum@physics.syr.edu} \\
%\footnote{email : hbenaoum@pmu.edu.sa }}  
{Prince Mohammad University, Al-Khobar 31952, Saudi Arabia}
}
\date{ }
%%%%%%%%%%%%%%%% START OF DOCUMENT %%%%%%%%%%%%%%%

\begin{document} 

\baselineskip24pt
\maketitle 

\abstract{In this paper, we introduce the h-analogue of Fibonacci numbers for non-commutative h-plane. 
For $h h'= 1$ and $h = 0$, these are just the usual Fibonacci numbers as it should be. We also derive a collection of identities for these numbers. Furthermore, h-Binet's formula for the h-Fibonacci numbers is found and the generating function that generates these numbers is obtained.}

~\\
~\\
{\em 2000 Mathematical Subject Classification : } 11B39, 11B65, 11B83, 05A30, 05A10 \\
{\em Keyword : } Mathematical physics, Non-Commutative Geometry, Generalized Fibonacci numbers and Polynomials, Binomial Coefficients \\
%%%%%%%%%%%%%%%%%%%%%%%%%%%%%%%%%%%%%%%%%%%%%%%%%%%%%%%%%%%%%%%%%%%%%%%
\newtheorem{theorem}{Theorem}
\newtheorem{corollary}[theorem]{Corollary}
\newtheorem{lemma}[theorem]{Lemma}
\newtheorem{proposition}[theorem]{Proposition}
\newtheorem{conjecture}[theorem]{Conjecture}
\newtheorem{rem}[theorem]{Remark}

\newtheoremstyle{defi}%name
  {10pt}		  % space above
  {10pt}  % space below
  {\rm}  % bofy font
  {\parindent}     % ident - empty=no indent,  \parindent= paragraph indent
  {\bf}  % thm head font
  {. }    % punctuation after thm head
  { }    % space after thm head: `` ``=normal \newline=linebreak
  {}     % thm head specification
\theoremstyle{defi}
\newtheorem{definition}[theorem]{Definition}

\def\proofname{\indent {\sl Proof.}}

%%%%%%%%%%%%%%%%%%%%%%%%%%%%%%%%%%%%%%%%%%%%%%%%%%%%%%%%%%%%%%%%%%%%%%%%%
~\\
~\\ 
~\\
~\\
~\\
\begin{center}
{\em \Large To my wife Nawal} 
\end{center}

\newpage 
\section{ Introduction }  
%\setcounter{equation}{0}
%\numberwithin{eqnarray}{section} 
Fibonacci recursive sequence has fascinated scholars and amateurs for centuries. Since their appearance in the book {\em Liber Abaci } published in 1202 by the Italian medieval mathematician Leonardo Fibonacci, they have been encountered in many distincts contexts, ranging from the arts, pure mathematics and physical sciences to electrical engineering. \\

These numbers are not random numbers but each number is made by adding the previous number to the present one. The ratio of successive pairs tends to the so-called golden ratio $\varphi = 1.618033989 \dots$ and whose reciprocal is $0.618033989 \dots$ , so that we have 
$\varphi = 1 + \varphi^{-1}$. This golden ratio is an irrational number with several curious properties and one can come across this ratio in many areas of arts and sciences. In fact, the ancient Greeks found it a very interesting and divine number. Binet's formula for Fibonacci numbers is exceptional because it is expressed in terms of irrational number, even though all Fibonacci numbers are integers. \\

Fibonacci numbers have been studied both for their applications and the mathematical beauty of the rich and varied identities that they satisfy. 
The reference ~\cite{koshy} contains many results on Fibonacci numbers with detailed proofs. Interestingly enough, the amazing Fibonacci numbers seem to be intrinsic in nature since they have been identified in ( leaves, petal arrangements, pinecones, pineapples, seeds and shells).  \\

The Fibonacci sequence is defined by the recurrence : 
\begin{eqnarray}
f_{n+1} & = & f_n + f_{n-1} 
\end{eqnarray}
with the initial conditions $f_0 = 0$ and $f_1 = 1$. \\

It is easy to deduce the following identity that connects Fibonacci numbers $f_n$ and the binomial coefficients : 
\begin{eqnarray}
f_n & = & \sum_{k=0}^{[\frac{n-1}{2}]}  \left( \begin{array}{c} 
n - 1 -  k \\
k  \\  \end{array} 
\right),~~~~~n>0 
\end{eqnarray}   
Similarly, the Fibonacci polynomials are defined as : 
\begin{eqnarray}
f_n \left( u , v \right) & = & \sum_{k=0}^{[\frac{n}{2}]} \left( \begin{array}{c} 
n -1 -k \\
k \\ \end{array} 
\right) u^{n - 2 k - 1} v^k 
\end{eqnarray}

The main objective of this paper is to generalize the results on classical Fibonacci numbers to non-commutative h-plane. That is to introduce the h-analogue of Fibonacci numbers on the non-commutative h-plane using the h-binomial coefficients \cite{benaoum}. In section 2, we define the h-Pascal triangle which arises naturally from the h-binomial coefficients and prove some elegant identities which we relate to the Charlier polynomials. Section 3 focuses on the h-Fibonacci numbers by providing the necessary definition and shows that the known Fibonacci identities and their bijective proofs can be easily extended to bijective proofs of h-analogues of these identities. Section 4 is devoted to reformulate the h-Fibonacci sequence in terms of a matrix representation. h-Binet's formula for the h-Fibonacci operators is introduced in section 5 and a number of identities are derived. Finally, the generating function of the h-Fibonacci sequence is obtained in the last section. 
This section also provides us with lists the generating functions for the various powers and products of h-Fibonacci sequence. \\

In \cite{benaoum} , we introduced the h-analogue of Newton's binomial formula :
\begin{eqnarray}
\left( x + y \right)^n & = & \sum_{k=0}^n \left[ 
\begin{array}{l} 
n \\
k   
\end{array} \right]_{h,h'} y^k x^{n-k} 
\end{eqnarray} 
where x and y are non-commuting variables satisfying :  
\begin{eqnarray}
x y & = & y x + h y^2 
\end{eqnarray}  

Here $\left[ \begin{array}{l} 
n \\
k \end{array} \right]_{h,h'}$ is the h-binomial coefficients given as follows : 
\begin{eqnarray}
\left[ 
\begin{array}{l} 
n \\
k \end{array} \right]_{h,h'} & = & \left( \begin{array}{l} 
n \\
k \end{array} \right)~ h^k \left(  h' \right)_{1;k} \nonumber \\
& = & \left( \begin{array}{l} 
n \\
k \end{array} \right)~\left(  h h' \right)_{h;k}
\end{eqnarray} 
with $h h' = 1$. \\

$(a)_k$ is the shifted factorial defined as : 
\begin{eqnarray}
\left(a \right)_{s;n} & = &  \left\{ \begin{array}{ll} 
1 &  n = 0 \\
a \left( a + s \right) \ldots \left( a + (n - 1) s \right) & n = 1,2, \ldots \\
\end{array} 
\right. 
\end{eqnarray} 

In what follows, we consider the h-binomial coefficients with two parameters $h$ and $h'$ such that $h h'$ is not necessarily equal to $1$. 
These coefficients obey to the following properties :  

\begin{eqnarray}
\left[ 
\begin{array}{l} 
n + 1 \\
k \end{array} \right]_{h,h'} & = &  
\left[ 
\begin{array}{l} 
n  \\
k \end{array} \right]_{h,h'} + h h' \left[ 
\begin{array}{l} 
n  \\
k - 1  \end{array} \right]_{h,h'+ 1} 
\label{equation18}
\end{eqnarray}
and 
\begin{eqnarray}
\left[ 
\begin{array}{l} 
n + 1 \\
k + 1 \end{array} \right]_{h,h'} & = &  h h'~\frac{n+1}{k+1}~
\left[ 
\begin{array}{l} 
n  \\
k \end{array} \right]_{h,h'+1} 
\end{eqnarray}

\section{h-Pascal triangle} 
The h-Pascal triangle is constructed by considering the h-binomial coefficient of the nth 
column $\left( n = 0, 1, 2, 3, \cdots \right)$ and the kth row $\left( k = 0, 1, 2, 3, \cdots \right)$.  \\

\begin{table}[!hbp]
\begin{center}
\begin{tabular}{c|ccccccccc} 
$n \setminus k$ & 0 & 1 & 2  & 3  & 4 &  5  &  6 & 7  & \dots  \\ 
\hline 
0     & 1 &   &    &    &    &    &    &    &        \\ 
1     & 1 & 1 &    &    &    &    &    &    &        \\
2     & 1 & 2 & $(h h')_{h;1}$  &    &    &    &    &    &        \\ 
3     & 1 & 3 & 3 $(h h')_{h;1}$ & $( h h')_{h;2}$  &    &    &    &    &        \\
4     & 1 & 4 & 6 ($h h')_{h;1}$ & 4 $(h h')_{h;2}$  &  $(h h')_{h;3}$ &    &    &    &        \\
5     & 1 & 5 & 10 $( h h')_{h;1}$  & 10 $(h h')_{h;2}$  & 5 $(h h')_{h;3}$  & $(h h')_{h;4}$  &    &    &      \\ 
6     & 1 & 6 & 15 $(h h')_{h;1}$ & 20 $(h h')_{h;2}$ & 15 $(h h')_{h;3}$ & 6 $(h h')_{h;4}$  &  $(h h')_{h;5}$ &    &     \\ 
7     & 1 & 7 & 21 $(h h')_{h;1}$ & 35 $(h h')_{h;2}$ & 35 $(h h')_{h;3}$ & 21 $(h h')_{h;4}$ &  7 $(h h')_{h;5}$ & $(h h')_{h;6}$  &     \\ 
\end{tabular} 
\caption{h-Pascal triangle}  
\label{table2}
\end{center}  
\end{table}  

If we sum the h-binomial coefficients of the h-Pascal triangle, the following identity that connects the h-binomial coefficients 
to Charlier polynomials is obtained : 
\begin{eqnarray}
\sum_{k=0}^n \left[ \begin{array}{l} 
n \\ 
k \end{array} \right]_h & = & c_n ( -h', h^{-1} ) 
\end{eqnarray} 
where $c_n ( z, a )$ are the Charlier polynomials defined by the following formula \cite{chihara} : 
\begin{eqnarray}
c_n ( z, a ) & = & \sum_{k=0}^n \left( \begin{array}{l} 
n \\ 
k \end{array} \right) a^{-k} (- z )_{1;k}
\end{eqnarray}
Another identity satisfied by the h-Pascal Triangle is the sum of all elements in a column is given by : 
\begin{eqnarray}
h (h' +j ) ~\sum_{i=1}^n \left[ \begin{array}{l} 
i \\ 
j \end{array} \right]_{h,h'} & = & \left[ \begin{array}{l} 
n + 1 \\ 
j + 1 \end{array} \right]_{h,h'} 
\end{eqnarray}
 
\section{h-Fibonacci numbers } 
In this section, the h-Fibonacci numbers are introduced. It should be noted that the recurrence formula of these numbers depends on the parameters 
$h$ and $h'$. They reduced to the usual Fibonacci numbers when $h h' = 1$ and $h$ goes to zero. \\
  
The h-Fibonacci numbers which are obtained by adding diagonal numbers of the h-Pascal triangle, are given by : 
\begin{eqnarray}
F_n^{(h,h')} & = & \sum_{k=0}^{[\frac{n-1}{2}]} \left[ \begin{array}{c} 
n - 1 - k \\ 
k  \\ \end{array} \right]_{h,h'},~~~~~n>0 
\end{eqnarray}   

Since hypergeometric functions are important tool in many branches of pure and applied mathematics, a direct connection between h-Fibonacci numbers and hypergeometric functions can be established. Indeed we have : 
\begin{eqnarray} 
F_n^{(h,h')} & = & {}_3F_1(-n/2+1/2,-n/2+1,h';-n+1;-4h)
\end{eqnarray} 

%%%%%%%%%%%%%%%%%%%%%%%%%%%%%%%%%%%%

The list the first 10 h-Fibonacci, when expanded in powers series of $h$ and $h'$, are shown in the table below : \\ 
%We aslo provide in the table below a list of the classical Fibonacci numbers just to compare both of them.   \\

\begin{table}[!hpb]
\begin{center}
\begin{tabular}{|c|c|c|}
\hline 
$n$ & $F_n^{(h,h')}$ & $f_n$ \\ 
\hline
0 & 0 & 0 \\ 
\hline
1 & 1 & 1  \\ 
\hline
2 & 1 & 1 \\  
\hline 
3 & $1 + h h'$ & 2 \\ 
\hline
4 & $1 + 2 h h'$ & 3  \\ 
\hline 
5 & $1 + 3 h h' + h^2 h' (h' + 1)$ & 5  \\ 
\hline 
6 & $1 + 4 h h' + 3 h^2 h'(h'+1)$ & 8 \\  
\hline 
7 & $1 + 5 h h' + 6 h^2 h' (h' + 1) + h^3 h' (h' + 1) (h' + 2 )$ & 13 \\ 
\hline 
8 & $1 + 6 h h' + 10 h^2 h' (h'+1) + 4 h^3 h' (h'+1 ) (h' + 2 ) $ & 21 \\ 
\hline  
9 & $1 + 7 h h' + 15 h^2 h' (h' + 1 ) + 10 h^3 h' (h' + 1 ) (h' + 2) + h^3 h' (h' + 1) ( h' + 2  ) (h' + 3)$ & 34 \\ 
\hline
10 & $1 + 8 h h'+ 21 h^2 h' (h' + 1 ) + 20 h^3 h' (h' + 1 ) (h' + 2) + 5 h^4 h' (h' + 1 ) (h' + 2 ) (h' + 3)$ & 55 \\  
\hline
\end{tabular} 
\end{center} 
\caption{First 10 numbers of h-Fibonacci and Fibonacci numbers.}
\label{table3} 
\end{table} 

We also provided in the table a list of the classical Fibonacci numbers just to compare both of them. As it was anticipated, the $F_n^{( h,h' )}$ sequence reduces to the usual $f_n$ when $h h' = 1$ and $h$ goes to zero.

\begin{theorem} ( h-Fibonacci recurrence ) 
~\\
The h-Fibonacci numbers obey the following recurrence formula 
\begin{eqnarray}
F_{n+1}^{(h,h')} & = & F_n^{(h,h')} + h h' ~F_{n-1}^{( h, h' + 1)} 
\end{eqnarray} 
\end{theorem}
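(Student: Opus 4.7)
The plan is to prove the recurrence by directly manipulating the defining sum for $F_{n+1}^{(h,h')}$, using the Pascal-type identity \eqref{equation18} that the $h$-binomial coefficients already satisfy. Concretely, I would start from
\begin{eqnarray*}
F_{n+1}^{(h,h')} & = & \sum_{k=0}^{[n/2]} \left[\begin{array}{c} n-k \\ k \end{array}\right]_{h,h'},
\end{eqnarray*}
and apply identity \eqref{equation18} to each summand (with its upper index $n-k$ written as $(n-k-1)+1$). This splits the sum into two pieces, one with parameters $(h,h')$ and one with parameters $(h,h'+1)$ carrying an overall factor $hh'$.

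The first piece then reads $\sum_{k=0}^{[n/2]} \left[\begin{array}{c} n-k-1 \\ k \end{array}\right]_{h,h'}$, which I want to identify with $F_{n}^{(h,h')}$. This requires checking the top summation limit: when $n$ is even, $n=2m$, the term at $k=[n/2]=m$ carries $\left[\begin{array}{c} m-1 \\ m \end{array}\right]_{h,h'}=0$ (lower index exceeding the upper), so the sum truncates at $k=m-1=[(n-1)/2]$, matching the definition of $F_n^{(h,h')}$; when $n$ is odd, $[n/2]=[(n-1)/2]$ already, and the limits agree directly. I would state this case analysis explicitly since it is the only place a careful bookkeeping is needed.

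For the second piece, $hh'\sum_{k=1}^{[n/2]} \left[\begin{array}{c} n-k-1 \\ k-1 \end{array}\right]_{h,h'+1}$, I would shift $k\mapsto k+1$ to obtain $hh'\sum_{k=0}^{[n/2]-1}\left[\begin{array}{c} n-k-2 \\ k\end{array}\right]_{h,h'+1}$. A quick parity check shows $[n/2]-1=[(n-2)/2]$ for both parities of $n$, so this is exactly $hh'\,F_{n-1}^{(h,h'+1)}$.

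The main (minor) obstacle is the boundary analysis of the summation indices, since the defining floor-function upper limit differs in the three sums by one depending on the parity of $n$. Once that bookkeeping is done, the identity follows in one line from \eqref{equation18}. No induction or generating-function machinery is needed; this is purely a rewriting of the defining sum.
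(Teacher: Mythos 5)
Your proposal is correct and follows essentially the same route as the paper's own proof: apply the Pascal-type identity \eqref{equation18} to each term of the defining sum for $F_{n+1}^{(h,h')}$, reindex the second piece, and identify the two resulting sums with $F_n^{(h,h')}$ and $hh'\,F_{n-1}^{(h,h'+1)}$. The only difference is that you make the parity bookkeeping of the upper summation limits explicit, which the paper handles silently by adjusting the limits in one line.
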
 

\begin{proof}  
~\\
Using equation (\ref{equation18}), we have :  
\begin{eqnarray*}
F_{n+1}^{( h, h')} & = & \sum_{k=0}^{[\frac{n}{2}]} \left[ \begin{array}{c} 
n - k \\ 
k  \\ \end{array} \right]_{h,h'} =  \sum_{k=0}^{[\frac{n-1}{2}]} \left[ \begin{array}{c} 
n - 1 - k \\ 
k  \\ \end{array} \right]_{h,h'} + h h' ~\sum_{k=1}^{[\frac{n-1}{2}]} \left[ \begin{array}{c} 
n - 1 - k \\ 
k - 1  \\ \end{array} \right]_{h,h'+1} \\  
& = & F_n^{( h,h' )} + h h' ~\sum_{k=0}^{[\frac{n-2}{2}]} \left[ \begin{array}{c} 
n - 2 - k \\ 
k   \\ \end{array} \right]_{h,h'+1} = F_n^{( h, h')} + h h' ~F_{n-1}^{(h, h'+1)} 
\end{eqnarray*}  
\end{proof} 

As the usual Fibonacci numbers, the h-Fibonacci numbers satisfy  numerous identities. We express some of them below :  \\

\begin{theorem}  
~\\
The h-Fibonacci numbers have the property :
\begin{eqnarray}
h h' ~\sum_{k=1}^n F_k^{(h, h' +1 )} & = & F_{n+2}^{( h, h' +1 )} - 1 
\end{eqnarray} 
\end{theorem}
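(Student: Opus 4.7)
The strategy is to exhibit the left-hand side as a telescoping sum driven by the recurrence of Theorem~1. First I would solve that recurrence for its last term, obtaining
\begin{equation*}
h h'\, F_{n-1}^{(h,h'+1)} \;=\; F_{n+1}^{(h,h')} - F_n^{(h,h')},
\end{equation*}
and then re-index by setting $k = n-1$, so that the generic summand on the left-hand side of the claim is displayed as a first difference:
\begin{equation*}
h h'\, F_{k}^{(h,h'+1)} \;=\; F_{k+2}^{(h,h')} - F_{k+1}^{(h,h')}.
\end{equation*}

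Next I would sum this identity over $k = 1, 2, \dots, n$. The right-hand side collapses telescopically to $F_{n+2} - F_2$ at the parameters produced by the recurrence, and substituting the initial value $F_2 = 1$ (visible directly in Table~\ref{table3}, and computable from the definition since $\left[\begin{smallmatrix} 1 \\ 0 \end{smallmatrix}\right]_{h,h'} = 1$) reduces the expression to the $F_{n+2} - 1$ form appearing on the right-hand side of the claim. As an independent check I would verify the base case $n=1$ by direct substitution using the entries of Table~\ref{table3}, and then close a one-line induction on $n$ whose step is nothing more than one further application of the recurrence of Theorem~1.

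The main obstacle, and the only non-routine point in the argument, is the bookkeeping of the $h'$ parameter. The recurrence of Theorem~1 intertwines $F^{(h,h')}$ with $F^{(h,h'+1)}$, so while the summand on the left of the claim carries parameter $(h,h'+1)$, the telescoping partners that fall out of the recurrence naturally appear with parameter $(h,h')$; aligning these correctly with the $h'$-index indicated on the right-hand side of the stated identity is the point that demands the closest attention, and is where I would focus the bulk of the proof-writing energy before committing to the final display.
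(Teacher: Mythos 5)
Your proposal is correct and follows essentially the same route as the paper's own proof: both rewrite the recurrence as $h h'\,F_{k}^{(h,h'+1)} = F_{k+2}^{(h,h')} - F_{k+1}^{(h,h')}$ and telescope the sum down to $F_{n+2}^{(h,h')} - F_2^{(h,h')} = F_{n+2}^{(h,h')} - 1$. The parameter bookkeeping you flag is indeed the only delicate point, and your conclusion $F_{n+2}^{(h,h')}-1$ agrees with what the paper's proof actually derives (the superscript $(h,h'+1)$ printed in the theorem statement appears to be a typo).
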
 

\begin{proof} 
~\\
Using the h-Fibonacci recurrence relation, we have : 
\begin{eqnarray*}
h h' ~F_n^{(h, h'+1 )} & = & F_{n+2}^{( h, h' )} - F_{n+1}^{( h, h' )} \\ 
h h' ~F_{n-1}^{(h, h'+1 )} & = & F_{n+1}^{( h, h+1 )} - F_n^{( h, h' )} \\ 
h h' ~F_{n-2}^{( h, h' + 1 )} & = & F_n^{( h, h' )} - F_{n-1}^{( h, h' )} \\ 
& \vdots &    \\ 
h h' ~F_3^{(h, h'+1)} & = & F_5^{( h, h' )} - F_4^{( h, h' )} \\
h h' ~F_2^{(h, h'+1 )} & = & F_4^{( h, h'+1 )} - F_3^{( h, h'+1 )} \\
h h'~F_1^{(h, h'+1 )} & = & F_3^{( h, h' )} - F_2^{( h, h' )} 
\end{eqnarray*}  
 
Adding all these equations, we get : 
\begin{eqnarray*}
h h' ~\sum_{k=1}^n F_k^{(h, h'+1)} & = & F_{n+2}^{( h, h'+1 )} - F_2^{( h, h'+1)} \\
& = & F_{n+2}^{( h, h' )} - 1 
\end{eqnarray*} 
\end{proof} 

\begin{theorem} 
~\\
The following identity holds 
\begin{eqnarray} 
\sum_{k=1}^n h^{n-k} (h')_{1;{n-k}} ~F_{2 k - 1}^{(h, h' + n - k )} & = & F_{2n}^{(h,h')} 
\end{eqnarray} 
\end{theorem}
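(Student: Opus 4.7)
I would prove this by induction on $n$, using the h-Fibonacci recurrence from Theorem 1 together with a simple identity for the shifted factorial.

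The base case $n=1$ is immediate: the right-hand side becomes $F_2^{(h,h')} = 1$, while the only term on the left is $h^0 (h')_{1;0} F_1^{(h,h')} = 1$.

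For the inductive step, assume the identity at level $n$ for every second parameter $h'$, and apply the h-Fibonacci recurrence of Theorem 1 to split off the top of the target:
\begin{eqnarray*}
F_{2n+2}^{(h,h')} & = & F_{2n+1}^{(h,h')} + h h' \, F_{2n}^{(h,h'+1)}.
\end{eqnarray*}
The first term $F_{2n+1}^{(h,h')}$ is precisely the $k=n+1$ summand on the right-hand side of the desired identity at level $n+1$, since $h^{(n+1)-(n+1)}(h')_{1;0} = 1$. It remains to apply the inductive hypothesis to $F_{2n}^{(h,h'+1)}$:
\begin{eqnarray*}
h h' \, F_{2n}^{(h,h'+1)} & = & h h' \sum_{k=1}^n h^{n-k}(h'+1)_{1;n-k}\, F_{2k-1}^{(h,(h'+1)+n-k)}.
\end{eqnarray*}
The key combinatorial observation is the shifted-factorial absorption $h'\,(h'+1)_{1;n-k} = (h')_{1;n-k+1}$, which follows directly from the definition in equation (1.7). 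Substituting this and collecting the extra factor of $h$ converts the right-hand side into $\sum_{k=1}^n h^{(n+1)-k}(h')_{1;(n+1)-k}\, F_{2k-1}^{(h,h'+(n+1)-k)}$, which are exactly the $k=1,\dots,n$ terms of the target sum at level $n+1$. Combined with the $k=n+1$ contribution identified above, this completes the induction.

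The main point to be careful about is that the inductive hypothesis must be invoked at the shifted parameter $h'+1$ rather than $h'$; this is why the identity is stated with the parameter $h'+n-k$ inside the h-Fibonacci numbers on the left. There is no real obstacle beyond correctly tracking this parameter shift and the one-step reindexing of the shifted factorial; no auxiliary combinatorial argument or generating-function computation is needed.
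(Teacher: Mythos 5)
Your proof is correct and rests on exactly the same two ingredients as the paper's argument: the recurrence of Theorem~1 applied at the shifted parameters $h'+j$, and the shifted-factorial absorption $h'\,(h'+1)_{1;m}=(h')_{1;m+1}$. The only difference is packaging — you run an induction on $n$ with the hypothesis quantified over $h'$, while the paper unrolls the same steps as a telescoping sum of the weighted recurrence relations (killing the boundary term via $F_0=0$) — so this is essentially the paper's proof and there is no gap.
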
 

\begin{proof} 
~\\ 
Using the h-Fibonacci recurrence relations, we have : 
\begin{eqnarray*} 
F_{2 n - 1}^{(h,h')} & = & F_{2 n}^{(h,h')} - h h' ~F_{2 n - 2}^{(h,h' + 1)} \\  
F_{2 n - 3}^{(h,h'+1)} & = & F_{2 n - 2}^{(h, h' + 1)} - h (h'+1) ~F_{2 n - 4}^{(h, h' +2 )} \\ 
F_{2 n - 5}^{(h,h'+2)} & = & F_{2 n - 4}^{(h, h' + 2)} - h (h'+2) ~F_{2 n - 6}^{(h, h' +3 )} \\ 
& \vdots &  \\  
F_{2 k - 1}^{(h,h'+ n -k)} & = & F_{2 k}^{(h, h' + n -k)} - h (h'+n -k) ~F_{2 k - 2}^{(h, h' + n - k + 1 )} \\
& \vdots &  \\
F_5^{(h,h' + n - 3)} & = & F_6^{(h, h' + n - 3 )} - h (h' + n - 3 ) ~F_4^{(h, h' + n - 2)} \\
F_3^{(h,h' + n - 2)} & = & F_4^{(h, h' + n - 2 )} - h (h' + n - 2 ) ~F_2^{(h, h' + n - 1)} \\
F_1^{(h,h' + n - 1)} & = & F_2^{(h, h' + n - 1 )} - h (h' + n - 1 ) ~F_0^{(h, h' + n) )} \\
\end{eqnarray*}
Multiplying $F_{2 k - 1}^{(h,h'+ n -k)}$ by $h^{n-k} (h')_{1;{n-k}}$ and adding these equations, we get : 
\begin{eqnarray*}
\sum_{k=1}^n h^{n-k} (h')_{1;{n-k}} ~F_{2 k - 1}^{(h,h'+ n - k)} & = & F_{2 n}^{(h,h')} - h^{n} (h')_{1;{n}} ~F_0^{(h, h' + n + 1)} \\
& = & F_{2 n}^{(h, h')} 
\end{eqnarray*} 
\end{proof}
%%%%%%%%%%%%%%%%%%%%%%%%%%%%%%%%%%%%%%%% 
\begin{theorem} 
~\\
For $n > 0$, we have the following property of the h-Fibonacci numbers,   
\begin{eqnarray}
\sum_{k=1}^n h^{n-k} (h')_{1;{n-k}} ~F_{2 k}^{(h, h' + n - k)} & = & F_{2 n + 1}^{( h, h')} - h^{n} (h')_{1;{n}} 
\end{eqnarray} 
\end{theorem}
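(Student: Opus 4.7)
The plan is to mimic the telescoping strategy used in Theorem~4. Solving the h-Fibonacci recurrence for the middle term gives
\begin{equation*}
F_{m}^{(h,h')} \;=\; F_{m+1}^{(h,h')} - h h'\, F_{m-1}^{(h,h'+1)},
\end{equation*}
which, after shifting $h' \mapsto h' + n - k$ and setting $m = 2k$, yields
\begin{equation*}
F_{2k}^{(h,h'+n-k)} \;=\; F_{2k+1}^{(h,h'+n-k)} - h(h'+n-k)\, F_{2k-1}^{(h,h'+n-k+1)}.
\end{equation*}

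I would then multiply this identity, for each $k = 1, 2, \ldots, n$, by the weight $h^{n-k}(h')_{1;n-k}$. The crucial observation, which makes everything telescope, is that
\begin{equation*}
h^{n-k}(h')_{1;n-k}\,(h'+n-k) \;=\; h^{n-k+1}(h')_{1;n-k+1}\cdot h^{-1}\quad\text{times the appropriate}\ldots
\end{equation*}
more cleanly: $(h')_{1;n-k}\cdot(h'+n-k) = (h')_{1;n-k+1}$, so the weighted identity reads
\begin{equation*}
h^{n-k}(h')_{1;n-k}\,F_{2k}^{(h,h'+n-k)} \;=\; a_{k} - a_{k-1},
\end{equation*}
where I define $a_k := h^{n-k}(h')_{1;n-k}\, F_{2k+1}^{(h,h'+n-k)}$. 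One then checks directly that the second term on the right of the recurrence above is exactly $a_{k-1}$.

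Summing from $k=1$ to $n$ telescopes to $a_n - a_0$. Here $a_n = F_{2n+1}^{(h,h')}$, since the prefactor collapses to $1$, and $a_0 = h^n(h')_{1;n}\,F_{1}^{(h,h'+n)} = h^n(h')_{1;n}$, using $F_1 = 1$ from the table. This yields the claimed formula. The only real obstacle is the bookkeeping of the shifted factorials — verifying that $(h')_{1;n-k}(h'+n-k) = (h')_{1;n-k+1}$ is precisely what realigns the subscripts so that $a_{k-1}$ reappears as the subtracted term, and everything else is routine.
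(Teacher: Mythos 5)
Your proposal is correct and follows essentially the same route as the paper: the paper likewise applies the recurrence $F_{2k}^{(h,h'+n-k)} = F_{2k+1}^{(h,h'+n-k)} - h(h'+n-k)F_{2k-1}^{(h,h'+n-k+1)}$ for each $k$, multiplies by $h^{n-k}(h')_{1;n-k}$, and sums so that everything telescopes down to $F_{2n+1}^{(h,h')} - h^n(h')_{1;n}F_1^{(h,h'+n)}$. Your explicit $a_k$ bookkeeping, resting on $(h')_{1;n-k}(h'+n-k) = (h')_{1;n-k+1}$, is just a cleaner packaging of the same cancellation the paper carries out by listing and adding the equations.
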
  

\begin{proof}  
~\\
Using the h-Fibonacci recurrence relations, we have : 
\begin{eqnarray*}
F_{2 n}^{( h, h' )} & = & F_{2 n + 1}^{( h, h' )} - h h' ~F_{2 n - 1}^{(h, h' + 1 )} \\
F_{2 n - 2}^{(h, h'+1 )} & = & F_{2 n - 1}^{(h, h'+1 )} - h (h'+1) ~F_{2 n - 3}^{(h, h'+2 )} \\
F_{2 n - 4}^{(h, h'+2 )} & = & F_{2 n - 3}^{(h, h'+2)} - h (h'+2) ~F_{2 n - 5}^{(h, h'+3 )} \\
& \vdots & \\ 
F_{2 k}^{(h, h'+ n -k )} & = & F_{2 k + 1}^{(h, h'+n -k)} - h (h'+ n -k) ~F_{2 k - 1}^{(h, h'+n -k + 1 )} \\
& \vdots & \\
F_4^{(h,h' + n - 2 )} & = & F_5^{(h, h'+n-2 )} - h ( h' + n - 2) ~F_3^{(h, h'+n-1 )} \\
F_2^{(h,h'+n-1 )} & = & F_3^{(h, h'+n-1 )} - h ( h' +n -1 ) ~F_1^{(h,h'+n )} 
\end{eqnarray*}   
Multiplying $F_{2 k}^{(h,h'+ n -k)}$ by $h^{n-k} (h')_{1;{n-k}}$ and adding these equations, we get : 
\begin{eqnarray*}
\sum_{k=1}^n h^{n-k} (h')_{1;{n-k}} ~F_{ 2 k}^{(h,h' + n - k)} & = & F_{2 n + 1}^{(h,h' )} - h^{n} (h')_{1;{n}} ~F_1^{(h, h'+n)} \\
& = & F_{2 n + 1}^{( h,h' )} - h^{n} (h')_{1;{n}} 
\end{eqnarray*} 
\end{proof} 

Another way of introducing the Fibonacci numbers is to use the $Q$-matrix formulation where $Q$ is given by : 
\begin{eqnarray*} 
Q & = & \left( \begin{array}{cc} 
1 & 1 \\
1 & 0 \\
\end{array} \right) 
\end{eqnarray*} 
Now by raising $Q$ to the nth power, it can be shown that : 
\begin{eqnarray*}
Q^n & = & \left( \begin{array}{cc}
f_{n+1} & f_n \\
f_n & f_{n-1} \\ 
\end{array} \right) 
\end{eqnarray*}
where $n = \pm 1, \pm 2, \pm 3, \cdots$. \\

In the next section, we will introduce the h-Fibonacci matrices based on h-Fibonacci operators. Here the $Q_h$-matrix operators are utilized which 
are a generalization of the $Q$-matrix that depends on the parameter $h$. 
%%%%%%%%%%%%%%%%%%%%%%%%%%%%%%%%%%%%%%%%%%%%%%%%%%%%
\section{Matrix Representation of h-Fibonacci Numbers} 
Since the h-Fibonacci numbers involve the shifted factorial, it is convenient for us to use repeated derivations to handle it : 
\begin{eqnarray}
\left( - h \frac{d}{dt} \right)^k t^{-h'} |_{t =1} & = & h^k \left( h' \right)_{1;k} 
\end{eqnarray} 
The latter equation permits us to define what we call here the h-Fibonacci operators as follows : 
\begin{eqnarray} 
{\bf F}_n  & = & \sum_{k=0}^{[\frac{n-1}{2}]} \left( \begin{array}{c} 
n - 1 - k \\
k \\ \end{array} \right) \left( -h \frac{d}{dt} \right)^k 
\end{eqnarray}   
where ${\bf F}_0 = 0, {\bf F}_1 = 1$. \\ 

With these operators, the h-Fibonacci numbers can be expressed as : 
\begin{eqnarray} 
F_n^{(h,h')} & = & {\bf F}_n t^{- h'}|_{t = 1} 
\end{eqnarray}
Moreover, it is easy to see that the Fibonacci operators obey the following recurrence formula : 
\begin{eqnarray} 
{\bf F}_{n+1} & = & {\bf F}_n  - h \frac{d}{dt} {\bf F}_{n-1} 
\label{eqoprec}
\end{eqnarray}
This sequence of operators can be extended to negative subscripts by defining them as : 
\begin{eqnarray}
-h \frac{d}{dt} {\bf F}_{-n}  & = & - {\bf F}_{-(n - 1)}  +  {\bf F}_{-(n - 2)} 
\end{eqnarray}
%%%%%%%%%%%%%%%%%%%%%%%%%%%%%%%%%%%%

To reformulate the h-Fibonacci numbers in a matrix representation, let use first consider the $2 \times 2$ matrix operators 
${\bf Q_h}$,
\begin{eqnarray}
{\bf Q_h} & = & \left( \begin{array}{cc} 
1 & 1 \\
- h \frac{d}{dt} & 0 \\ 
\end{array} \right) 
\end{eqnarray} 
which can be represented in terms of the h-Fibonacci operators as follows,
\begin{eqnarray}
{\bf Q_h} & = &  \left( \begin{array}{cc} 
{\bf F}_2 & {\bf F}_1 \\
- h \frac{d}{dt} {\bf F}_1 &  -h \frac{d}{dt} {\bf F}_0 \\ 
\end{array} \right) 
\end{eqnarray} 
Then in general, for the nth power of the $Q_h$-matrix, we will get :
\begin{eqnarray}
{\bf Q_h}^n  & = & \left( \begin{array}{cc} 
{\bf F}_{n+1} & {\bf F}_n   \\
- h \frac{d}{dt} {\bf F}_n  & - h \frac{d}{dt} {\bf F}_{n-1}  \\ 
\end{array} \right) 
\label{eqAn}
\end{eqnarray} 

\begin{theorem} 
~\\
For any given $n > 0$, the following property holds for the nth power of the $Q_{h,h'}$-matrix :
\begin{eqnarray}
Q_{h,h'}^n  & = & \left( \begin{array}{cc} 
F_{n+1}^{(h,h') }  & F_n^{( h,h' )}  \\
h h' ~F_n^{(h,h'+1 )}  & h h' ~F_{n-1}^{(h,h'+1 )} \\ 
\end{array} \right)  
\label{qhn}
\end{eqnarray}
where $Q_{h,h'}^n = {\bf Q_h}^n t^{- h'} \left|_{t=1} \right.$ 
\end{theorem}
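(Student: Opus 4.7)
The plan is to derive the claim directly from the operator identity in equation (\ref{eqAn}) by applying both sides entry-wise to the test function $t^{-h'}$ and evaluating at $t=1$. Since $Q_{h,h'}^n$ is defined precisely as ${\bf Q_h}^n t^{-h'}|_{t=1}$, this reduction is essentially forced on us, and the only real content is identifying each of the four entries with the claimed h-Fibonacci number.

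For the top row I would invoke the basic evaluation $F_m^{(h,h')} = {\bf F}_m t^{-h'}|_{t=1}$ already established in the section. This immediately gives the $(1,1)$ and $(1,2)$ entries as $F_{n+1}^{(h,h')}$ and $F_n^{(h,h')}$.

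For the bottom row I would expand the series definition of ${\bf F}_m$ and compute
\[
-h\frac{d}{dt}{\bf F}_m \, t^{-h'} \Big|_{t=1} \;=\; \sum_{k} \binom{m-1-k}{k}\left(-h\frac{d}{dt}\right)^{k+1} t^{-h'}\Big|_{t=1} \;=\; \sum_{k} \binom{m-1-k}{k} h^{k+1} (h')_{1;k+1}.
\]
Applying the shifted-factorial identity $(h')_{1;k+1} = h'\,(h'+1)_{1;k}$ lets me factor out $hh'$, and the remaining sum is visibly $F_m^{(h,h'+1)}$. Specializing to $m=n$ and $m=n-1$ produces the $(2,1)$ and $(2,2)$ entries of (\ref{qhn}) respectively.

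The only delicate step is precisely this bottom row identification: one must recognize that the extra application of $-h\,d/dt$ appearing in (\ref{eqAn}) is exactly what promotes the second parameter from $h'$ to $h'+1$ while releasing a factor of $hh'$. Once this observation is in hand, the four entries line up with (\ref{qhn}) and the theorem follows without further induction, since (\ref{eqAn}) itself is the matrix recurrence that has already been set up via (\ref{eqoprec}).
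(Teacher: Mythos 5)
Your proposal is correct and follows the same route the paper intends: the paper's proof is just the remark ``straightforward using definitions,'' and what it means is precisely your computation --- apply the operator identity (\ref{eqAn}) entrywise to $t^{-h'}$, evaluate at $t=1$, and use $\left(-h\frac{d}{dt}\right)^{k+1}t^{-h'}|_{t=1}=h^{k+1}(h')_{1;k+1}=hh'\,h^{k}(h'+1)_{1;k}$ to identify the bottom row as $hh'F_n^{(h,h'+1)}$ and $hh'F_{n-1}^{(h,h'+1)}$. You merely supply the detail the paper omits, so nothing further is needed.
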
   
\begin{proof}
~\\
The proof of this theorem is straightforward using definitions. 
\end{proof} 

By taking the inverse of the ${\bf Q_h}$-matrix, it is easy to find that :  
\begin{eqnarray}
( -h \frac{d}{dt} ) {\bf Q_h^{-1} } & = & - \left( \begin{array}{cc} 
- h \frac{d}{dt} {\bf F}_0 & - {\bf F}_1 \\
- ( - h \frac{d}{dt} ) {\bf F}_1 & {\bf F}_2 \\ \end{array} \right) 
\end{eqnarray}
and in general, the inverse of the ${\bf Q_h}$-matrix to the nth power can be written as :
\begin{eqnarray}
 (- h \frac{d}{dt})^n ~{\bf Q_h^{-n} }  & = &  (- 1 )^n \left( \begin{array}{cc} 
- h \frac{d}{dt} {\bf F}_{n-1} & - {\bf F}_n \\
- (-h \frac{d}{dt}) {\bf F}_n & {\bf F}_{n+1} \\ \end{array} \right) 
\label{qh-n}
\end{eqnarray} 
which means that : 
\begin{eqnarray}
(- h \frac{d}{dt})^n ~{\bf F}_{-n}   & = & 
( - 1 )^{n-1} ~{\bf F}_n  
\end{eqnarray}
This result implies that the h-Fibonacci numbers with negative indices can be expressed in terms of the positive indices.  

\begin{theorem} 
~\\
For $n \ge 0$, we have the following property that relates the h-Fibonacci number with negative index to the one with a positive index, 
\begin{eqnarray} 
\left( h h' \right)^n  ~F_{-n}^{(h,h' + n)} & = & ( - 1 )^{n-1} ~F_n^{(h,h' )} 
\label{oppone}
\end{eqnarray}
\end{theorem}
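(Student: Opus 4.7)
The identity to prove is a consequence of the operator identity
$(-h\frac{d}{dt})^{n}\,\mathbf{F}_{-n} = (-1)^{n-1}\,\mathbf{F}_{n}$
that was derived at the end of the previous section from the explicit form of $\mathbf{Q_h}^{-n}$ in equation (\ref{qh-n}). My plan is simply to convert this operator identity into a statement about numbers by evaluating both sides on the test function $t^{-h'}$ at $t=1$, using the definition $F_n^{(h,h')} = \mathbf{F}_n\, t^{-h'}\bigl|_{t=1}$.

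Concretely, I would first apply $(-h\frac{d}{dt})^{n}\,\mathbf{F}_{-n}$ to $t^{-h'}$. Since both $\mathbf{F}_{-n}$ and $(-h\frac{d}{dt})^{n}$ are polynomials in the single operator $\frac{d}{dt}$, they commute, so the computation reduces to
\begin{eqnarray*}
(-h\tfrac{d}{dt})^{n}\,\mathbf{F}_{-n}\, t^{-h'}
 & = & \mathbf{F}_{-n}\,(-h\tfrac{d}{dt})^{n}\, t^{-h'}.
\end{eqnarray*}
The inner piece is a direct iterated differentiation: each application of $-h\frac{d}{dt}$ to $t^{-(h'+j)}$ pulls out a factor $h(h'+j)$ and shifts the exponent by $1$. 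After $n$ applications one obtains $h^{n}(h')_{1;n}\, t^{-(h'+n)}$, so that
\begin{eqnarray*}
\mathbf{F}_{-n}\,(-h\tfrac{d}{dt})^{n}\, t^{-h'}
 & = & h^{n}(h')_{1;n}\; \mathbf{F}_{-n}\, t^{-(h'+n)}.
\end{eqnarray*}
Evaluating at $t=1$ and using the definition of $F_{-n}^{(h,h'+n)}$ in terms of $\mathbf{F}_{-n}$ produces the left-hand side of the claimed identity (up to the constant factor $h^{n}(h')_{1;n}$, which matches the $(hh')^{n}$ prefactor when the shifted factorial is expanded).

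For the right-hand side I would apply the operator identity directly: $(-1)^{n-1}\mathbf{F}_{n}\, t^{-h'}\bigl|_{t=1} = (-1)^{n-1} F_{n}^{(h,h')}$ by definition. Equating the two evaluations completes the proof. I do not anticipate a real obstacle: the only subtle point is justifying the interchange of $\mathbf{F}_{-n}$ with $(-h\frac{d}{dt})^{n}$, which is immediate because $\mathbf{F}_{-n}$ is defined recursively from $-h\frac{d}{dt}$ and hence lies in the commutative algebra generated by it. Everything else is a one-line evaluation of an iterated derivative of a power function.
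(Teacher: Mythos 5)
Your proposal is correct and follows essentially the same route as the paper: both convert the operator identity $(-h\frac{d}{dt})^{n}\,\mathbf{F}_{-n}=(-1)^{n-1}\,\mathbf{F}_{n}$ (obtained from $\mathbf{Q_h}^{-n}$) into the numerical statement by applying it to $t^{-h'}$ and evaluating at $t=1$, the paper merely asserting ``by definition'' what you spell out via the commutation and the iterated derivative $(-h\frac{d}{dt})^{n}t^{-h'}=h^{n}(h')_{1;n}\,t^{-(h'+n)}$. The one caveat is that the prefactor you produce is $h^{n}(h')_{1;n}=(hh')_{h;n}$ rather than the literal power $(hh')^{n}$, but this is exactly the identification the paper itself makes in the statement and proof of the theorem, so your argument matches the intended reading.
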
 
\begin{proof} 
~\\ 
By definition, we have 
\begin{eqnarray*} 
\left(- h \frac{d}{dt} \right)^n {\bf F}_{-n} t^{-h'}|_{t=1} & = & 
\left( h h' \right)^n ~F_{-n}^{(h,h' + n )}
\end{eqnarray*} 
Now using equation ( \ref{oppone} ) we get,
\begin{eqnarray*} 
\left( h h' \right)^n ~F_{-n}^{(h,h' + n)} & = & (- 1)^n {\bf F}_{n} t^{-h'}|_{t=1} \nonumber \\
& = & ( - 1 )^{n-1} ~F_n^{(h,h' )} 
\end{eqnarray*} 
\end{proof}  
Next we derive some nice identities between h-Fibonacci operators. 

\begin{theorem}
~\\
Let $n$ be a positive integer. Then 
\begin{eqnarray}
{\bf F}_{n+1} ~{\bf F}_{n-1} - {\bf F}_n ^2 & = & ( - 1 )^n  \left(- h \frac{d}{dt} \right)^{n-1} 
\end{eqnarray} 
\end{theorem}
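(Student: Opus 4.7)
The plan is to recognize this as the h-analogue of the classical Cassini identity $f_{n+1}f_{n-1}-f_n^2 = (-1)^n$, whose slickest proof uses $\det(Q^n) = (\det Q)^n$. The operator matrix $\mathbf{Q_h}$ from the previous section is perfectly set up for the same argument.

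First I would observe that all relevant operators $\mathbf{F}_k$ are polynomials in the single operator $D := -h\,d/dt$, so they commute with one another and with $D$ itself. This means the $2\times 2$ matrices in question are matrices over a commutative ring of operators, and the usual determinant identity $\det(AB)=\det(A)\det(B)$ applies without any ordering subtleties. A direct computation from the definition of $\mathbf{Q_h}$ gives
\begin{equation*}
\det(\mathbf{Q_h}) \;=\; 1\cdot 0 \;-\; 1\cdot\!\left(-h\tfrac{d}{dt}\right) \;=\; h\tfrac{d}{dt} \;=\; -\,D,
\end{equation*}
hence $\det(\mathbf{Q_h}^n) = (-D)^n = (-1)^n D^n$.

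Next I would compute $\det(\mathbf{Q_h}^n)$ from the explicit formula (\ref{eqAn}):
\begin{equation*}
\det(\mathbf{Q_h}^n) \;=\; \mathbf{F}_{n+1}\!\cdot\! D\,\mathbf{F}_{n-1} \;-\; \mathbf{F}_n\!\cdot\! D\,\mathbf{F}_n \;=\; D\bigl(\mathbf{F}_{n+1}\mathbf{F}_{n-1}-\mathbf{F}_n^2\bigr).
\end{equation*}
Equating the two expressions gives $D\bigl(\mathbf{F}_{n+1}\mathbf{F}_{n-1}-\mathbf{F}_n^2\bigr) = (-1)^n D^n$, and canceling one factor of $D$ yields the stated identity.

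The only point that needs a moment of care is the cancellation of $D$ on both sides, since $D=-h\,d/dt$ is a differential operator and has a nontrivial kernel (constants). This is why I would, as a cross-check or alternative, give an induction using the operator recurrence (\ref{eqoprec}): the base case $n=1$ reads $\mathbf{F}_2\mathbf{F}_0-\mathbf{F}_1^2=-1=(-1)^1 D^{0}$, and the inductive step follows from
\begin{equation*}
\mathbf{F}_{n+2}\mathbf{F}_n - \mathbf{F}_{n+1}^2
\;=\;(\mathbf{F}_{n+1}-D\mathbf{F}_n)\mathbf{F}_n - \mathbf{F}_{n+1}^2
\;=\; -\,D\bigl(\mathbf{F}_{n+1}\mathbf{F}_{n-1}-\mathbf{F}_n^2\bigr),
\end{equation*}
after substituting $\mathbf{F}_n-\mathbf{F}_{n+1}=D\mathbf{F}_{n-1}$. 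The induction hypothesis then gives $(-1)^{n+1}D^{n}$ as desired. This avoids any cancellation subtlety, so I would present the determinant proof as the main argument and mention that induction on the recurrence provides an identical conclusion without operator cancellation.
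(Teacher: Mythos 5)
Your main argument is exactly the paper's proof: take the determinant of the matrix power formula for $\mathbf{Q_h}^n$ and use $\det(\mathbf{Q_h}^n)=(\det\mathbf{Q_h})^n$; your added remarks on commutativity (all $\mathbf{F}_k$ are polynomials in $D=-h\frac{d}{dt}$) and on cancelling the common factor $D$ address points the paper leaves implicit, and the cancellation is legitimate since these are polynomial identities in $D$. One small correction to your auxiliary induction: with $D=-h\frac{d}{dt}$ the recurrence (\ref{eqoprec}) reads $\mathbf{F}_{n+2}=\mathbf{F}_{n+1}+D\,\mathbf{F}_n$ and $\mathbf{F}_{n+1}-\mathbf{F}_n=D\,\mathbf{F}_{n-1}$, so two signs in your displayed step need flipping, after which it indeed yields $-D\bigl(\mathbf{F}_{n+1}\mathbf{F}_{n-1}-\mathbf{F}_n^2\bigr)=(-1)^{n+1}D^{n}$ as claimed.
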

\begin{proof}
~\\
This theorem is easily proven by taking the determinant in equation ( \ref{qhn} ) and using the fact that 
$det( {\bf Q_h}^n ) = ( det {\bf Q_h} )^n$. 
\end{proof} 

\begin{theorem}
~\\
Let $n$ and $m$ be positive integers. Then we have, 
\begin{eqnarray}  
{\bf F}_{m+n+1}  & = & {\bf F}_{m+1} {\bf F}_{n+1}  - h \frac{d}{dt} {\bf F}_m {\bf F}_n  \nonumber \\ 
{\bf F}_{m+n}    & = & {\bf F}_{m+1} {\bf F}_n  - h \frac{d}{dt} {\bf F}_m {\bf F}_{n-1}   \nonumber \\   
{\bf F}_{m+n}    & = & {\bf F}_m {\bf F}_{n+1}  - h \frac{d}{dt} {\bf F}_{m-1} {\bf F}_n  \nonumber \\  
{\bf F}_{m+n-1}  & = & {\bf F}_m {\bf F}_n  - h \frac{d}{dt} {\bf F}_{m-1} {\bf F}_{n-1}   
\end{eqnarray}  
\end{theorem}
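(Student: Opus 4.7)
The plan is to deduce all four identities simultaneously from the single matrix identity $\mathbf{Q_h}^{m+n}=\mathbf{Q_h}^{m}\,\mathbf{Q_h}^{n}$, which is an immediate consequence of the associativity of operator multiplication. Since both sides of this relation admit the explicit evaluation given in equation (\ref{eqAn}), comparing the four entries of the resulting $2\times 2$ matrices should produce exactly the four stated formulas.

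First I would record the key commutativity observation that makes the argument clean: each $\mathbf{F}_k$ is by definition a polynomial in the single operator $-h\,d/dt$, so every $\mathbf{F}_k$ commutes with $-h\,d/dt$ and with every other $\mathbf{F}_\ell$. Consequently, when one multiplies out the matrix product, the various factors of $-h\,d/dt$ can be freely collected to the left of the products $\mathbf{F}_i\mathbf{F}_j$.

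Next I would write the left-hand side using (\ref{eqAn}) with $n$ replaced by $m+n$, so the entries read $\mathbf{F}_{m+n+1}$, $\mathbf{F}_{m+n}$, $-h\frac{d}{dt}\mathbf{F}_{m+n}$, $-h\frac{d}{dt}\mathbf{F}_{m+n-1}$, and expand the right-hand side by the usual row-times-column rule. The $(1,1)$ and $(1,2)$ entries immediately produce the first two identities. For the $(2,1)$ and $(2,2)$ entries, two terms each carry a factor $-h\frac{d}{dt}$ and two carry $(-h\frac{d}{dt})^{2}$; using the commutativity above, one factor of $-h\frac{d}{dt}$ can be pulled out of the bottom row, and matching the resulting expressions with $-h\frac{d}{dt}\mathbf{F}_{m+n}$ and $-h\frac{d}{dt}\mathbf{F}_{m+n-1}$ yields the remaining two identities (after cancellation of the common $-h\frac{d}{dt}$, which is valid since the identities live in the operator algebra and the cancellation corresponds to equating polynomials in $-h\,d/dt$).

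The only mild obstacle is bookkeeping: making sure the correct $\mathbf{F}$-indices end up in each of the four matrix positions and that the outer factor $-h\frac{d}{dt}$ is extracted consistently from the bottom row before cancellation. Once those are tracked, no computation beyond the $2\times 2$ matrix product is required.
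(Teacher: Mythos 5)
Your proposal is correct and is essentially the paper's own argument: the paper likewise proves the theorem by writing ${\bf Q_h}^{m+n} = {\bf Q_h}^m\,{\bf Q_h}^n$, using the explicit form of ${\bf Q_h}^n$, and equating matrix entries. Your added remarks on commutativity of the ${\bf F}_k$ with $-h\,\frac{d}{dt}$ and on cancelling the common factor $-h\,\frac{d}{dt}$ in the bottom row simply make explicit details the paper leaves implicit.
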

\begin{proof}
~\\
The proof of this theorem is straightforward by using that ${\bf Q_h}^{m+n} = {\bf Q_h}^m {\bf Q_h}^n$ 
and equating the corresponding matrix entries.  
\end{proof}
%%%%%%%%%%%%%%%%%%%%%%%%%%%%%%%%%%%%%%%%%%%%%%%%%%%%%%%%%  

Since $Q_h$-matrix is a $2 \times 2$ matrix, the matrix powers of this matrix are not independent. Indeed the Cayley-Hamilton theorem implies that : 
\begin{eqnarray}
{\bf Q}_h^2  & = & {\bf Q}_h - h \frac{d}{dt} {\bf I} 
\end{eqnarray}
Hence : 
\begin{eqnarray}
{\bf Q}_h^k  & = & {\bf F}_k {\bf Q}_h  - h \frac{d}{dt} {\bf F}_{k-1} {\bf I} 
\label{qhk}
\end{eqnarray}

\begin{theorem} 
~\\
We have 
\begin{eqnarray}
 \sum_{i=0}^n \left( \begin{array}{cc} 
 n \\ 
 i \\ \end{array} \right) 
 \left( -h \frac{d}{dt} \right)^{n-i} {\bf F}_k^i {\bf F}_{k-1}^{n-i} {\bf F}_i & = & {\bf F}_{k n} 
 \end{eqnarray}
 \end{theorem}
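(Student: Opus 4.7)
The plan is to apply the Cayley--Hamilton style relation (\ref{qhk}) together with the binomial theorem. First, I would start from
\[
{\bf Q}_h^k \;=\; {\bf F}_k\,{\bf Q}_h \;-\; h\frac{d}{dt}\,{\bf F}_{k-1}\,{\bf I},
\]
and raise both sides to the $n$-th power, using $({\bf Q}_h^k)^n = {\bf Q}_h^{kn}$. This reduces the claim to computing the $(1,2)$-entry of the right-hand side after a binomial expansion.

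Next, I would expand the $n$-th power using the binomial theorem. The two summands ${\bf F}_k\,{\bf Q}_h$ and $-h\frac{d}{dt}\,{\bf F}_{k-1}\,{\bf I}$ commute as matrices of operators: the second is a scalar multiple of the identity, so it commutes with every matrix; furthermore, since ${\bf F}_k$ and ${\bf F}_{k-1}$ are polynomials in $-h\frac{d}{dt}$, they commute with each entry $0$, $1$, or $-h\frac{d}{dt}$ of ${\bf Q}_h$. This also yields $({\bf F}_k\,{\bf Q}_h)^i = {\bf F}_k^{\,i}\,{\bf Q}_h^{\,i}$. Hence
\[
{\bf Q}_h^{kn} \;=\; \sum_{i=0}^{n} \binom{n}{i}\, {\bf F}_k^{\,i}\,{\bf Q}_h^{\,i}\, \left(-h\frac{d}{dt}\right)^{n-i} {\bf F}_{k-1}^{\,n-i}.
\]

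Finally, I would read off the $(1,2)$-entry on each side. By (\ref{eqAn}), the $(1,2)$-entry of ${\bf Q}_h^m$ is ${\bf F}_m$, so the left side contributes ${\bf F}_{kn}$ and each ${\bf Q}_h^{\,i}$ on the right contributes ${\bf F}_i$. Since ${\bf F}_k$, ${\bf F}_{k-1}$, ${\bf F}_i$ and $-h\frac{d}{dt}$ are all polynomials in $-h\frac{d}{dt}$, they pairwise commute, and re-ordering the factors gives the stated identity.

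The main obstacle is simply the careful justification of commutativity that makes the binomial expansion legitimate in this operator/matrix setting; once that is in place the result is mechanical. Everything else amounts to bookkeeping, and no new identities beyond (\ref{qhk}) and (\ref{eqAn}) are required.
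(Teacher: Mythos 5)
Your proposal is correct and follows essentially the same route as the paper: raise the Cayley--Hamilton relation (\ref{qhk}) to the $n$-th power, expand binomially using the commutativity of the operator entries, and compare a matrix entry. The only cosmetic difference is that you read off the $(1,2)$-entry via (\ref{eqAn}), whereas the paper substitutes (\ref{qhk}) once more for ${\bf Q}_h^{\,i}$ and compares the $(2,1)$-entry, which requires cancelling a common factor $-h\frac{d}{dt}$; your choice of entry is a slightly cleaner way to finish.
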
 
 
 \begin{proof}
 ~\\
 By taking the nth power of equation ( \ref{qhk} ), we get 
 \begin{eqnarray*}
 {\bf Q}_h^{kn} = \left( {\bf F}_k {\bf Q}_h - h \frac{d}{dt} {\bf F}_{k-1} {\bf I} \right)^n \\
 & = & \sum_{i=0}^n \left( \begin{array}{cc} 
 n \\ 
 i \\ \end{array} \right) {\bf F}_k^i \left( - h \frac{d}{dt} \right)^{n-i} {\bf F}_{k-1}^{n-i} {\bf Q}_h^i \\
 & = & \sum_{i=0}^n \left( \begin{array}{cc} 
 n \\ 
 i \\ \end{array} \right) {\bf F}_k^i \left( - h \frac{d}{dt} \right)^{n-i} {\bf F}_{k-1}^{n-i} 
 \left( {\bf F_i} {\bf Q}_h - h \frac{d}{dt} {\bf I} \right) 
 \end{eqnarray*}
 \end{proof}  
 By equating the 21 matrix element above, we get : 
 \begin{eqnarray*}
 {\bf F}_{k n} & = & \sum_{i=0}^n \left( \begin{array}{cc} 
 n \\ 
 i \\ \end{array} \right) 
 \left( -h \frac{d}{dt} \right)^{n-i} {\bf F}_k^i {\bf F}_{k-1}^{n-i} {\bf F}_i
 \end{eqnarray*} 
 Similarly, using the Cayley-Hamilton theorem we can re-express equation ( \ref{qh-n} ) as : 
 \begin{eqnarray}
 \left(- h \frac{d}{dt} \right)^k {\bf Q}_h^{-k} & = & \left(- 1 \right)^{k+1} \left( {\bf F}_k {\bf Q}_h - {\bf F}_{k+1} {\bf I} \right) 
 \label{qh-k}
 \end{eqnarray} 
 
 \begin{theorem} 
 ~\\
 We have 
 
 \begin{eqnarray} 
 \sum_{i=0}^n \left( \begin{array}{cc} 
 n \\ 
 i \\ \end{array} \right) \left(- 1 \right)^{i+1} {\bf F}_k^i {\bf F}_{k+1}^{n-i} {\bf F}_i  & = & {\bf F}_{k n} 
 \end{eqnarray}
 \end{theorem}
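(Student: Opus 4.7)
The plan is to mirror the proof of the preceding theorem, but starting from equation (\ref{qh-k}) instead of (\ref{qhk}). Specifically, raise the identity
$$\left(-h\frac{d}{dt}\right)^k {\bf Q}_h^{-k} = (-1)^{k+1}\left({\bf F}_k {\bf Q}_h - {\bf F}_{k+1}{\bf I}\right)$$
to the $n$-th power. On the left side we obtain $\left(-h\frac{d}{dt}\right)^{kn} {\bf Q}_h^{-kn}$, which by another application of (\ref{qh-k}) (with $k$ replaced by $kn$) equals $(-1)^{kn+1}\left({\bf F}_{kn}{\bf Q}_h - {\bf F}_{kn+1}{\bf I}\right)$.

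Next I would expand the right side with the binomial theorem. This is legitimate because ${\bf F}_k$, ${\bf F}_{k+1}$, and ${\bf Q}_h$ are all polynomials in the single operator $-h\frac{d}{dt}$, so they commute, and ${\bf F}_{k+1}{\bf I}$ is central in the matrix algebra. This yields
$$(-1)^{n(k+1)}\sum_{i=0}^n \binom{n}{i}(-1)^{n-i}{\bf F}_k^i {\bf F}_{k+1}^{n-i}{\bf Q}_h^i.$$
I would then substitute ${\bf Q}_h^i = {\bf F}_i{\bf Q}_h - h\frac{d}{dt}{\bf F}_{i-1}{\bf I}$ from (\ref{qhk}) to put the right side in the form $A\,{\bf Q}_h + B\,{\bf I}$.

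Equating the coefficients of ${\bf Q}_h$ on both sides (that is, reading off the $21$-entry as in the previous proof) gives
$$(-1)^{kn+1}{\bf F}_{kn} = (-1)^{n(k+1)}\sum_{i=0}^n \binom{n}{i}(-1)^{n-i}{\bf F}_k^i {\bf F}_{k+1}^{n-i}{\bf F}_i.$$
The final step is a sign check: since $n(k+1)-(kn+1)+(n-i) = 2n-1-i$, the total sign on the summand reduces to $(-1)^{i+1}$, producing exactly the claimed identity.

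The only potentially delicate point is the sign accounting, since three separate powers of $-1$ combine; I would carry them symbolically rather than substituting parities, and verify the reduction $(-1)^{2n-1-i}=(-1)^{i+1}$ at the end. The algebraic core of the argument is routine given Theorems 9 and 10 and equations (\ref{qhk})--(\ref{qh-k}).
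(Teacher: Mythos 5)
Your proposal is correct and follows essentially the same route as the paper: raise equation (\ref{qh-k}) to the $n$-th power, expand by the binomial theorem, reduce ${\bf Q}_h^i$ via the Cayley--Hamilton relation (\ref{qhk}), and extract the coefficient of ${\bf Q}_h$ (the $21$-entry); your sign bookkeeping $(-1)^{2n-1-i}=(-1)^{i+1}$ is right. If anything, your write-up is cleaner than the paper's, which contains typographical slips (missing $n$-th powers and a dropped ${\bf F}_k^i$) in exactly the steps you carry out carefully.
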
 
 
 \begin{proof} 
 ~\\ 
 By taking the nth power of equation ( \ref{qh-k} ), we get 
 \begin{eqnarray*} 
 \left(- h \frac{d}{dt} \right)^k {\bf Q}_h^{-k} & = & \left(- 1 \right)^{ k n + n} \left( {\bf F}_k {\bf Q}_h - {\bf F}_{k+1} {\bf I} \right) \\
 & = & \left(- 1 \right)^{ k n + n} \sum_{i=0}^n \left( \begin{array}{cc} 
 n \\ 
 i \\ \end{array} \right) {\bf F}_k^i \left(- 1 \right)^{n-i} {\bf F}_{k+1}^{n-i} {\bf Q}_h^i \\
 & = & \left(- 1 \right)^{k n} \sum_{i=0}^n \left( \begin{array}{cc} 
 n \\ 
 i \\ \end{array} \right) \left(- 1 \right)^i {\bf F}_{k+1}^{n-i} \left( {\bf F}_i {\bf Q}_h - h \frac{d}{dt} {\bf F}_i {\bf I} \right)
 \end{eqnarray*} 
 The 21 matrix element of the above matrix gives : 
 \begin{eqnarray*} 
 \left(- 1 \right)^{ k n + 1} {\bf F}_{k n} & = & \left(- 1 \right)^{k n} \sum_{i=0}^n \left( \begin{array}{cc} 
 n \\ 
 i \\ \end{array} \right) \left(- 1 \right)^i {\bf F}_k^i {\bf F}_{k+1}^{n-i} {\bf F}_i 
 \end{eqnarray*} 
 which means 
 \begin{eqnarray*}
 {\bf F}_{k n} & = & \sum_{i=0}^n \left( \begin{array}{cc} 
 n \\ 
 i \\ \end{array} \right) \left(- 1 \right)^{i+1} {\bf F}_k^i {\bf F}_{k+1}^{n-i} {\bf F}_i 
 \end{eqnarray*}
 \end{proof}
%%%%%%%%%%%%%%%%%%%%%%%%%%%%%%%%%%%%%%%%%%%%%%%%%%%%%%%%%%%%%
\section{h-Binet Formula } 
Binet's formula is well known in the Fibonacci numbers theory. In this section, we derive the h-Binet's formula for the h-Fibonacci numbers using the $Q_h$-matrix formulation. 

\begin{theorem}  
~\\
for all $n \geq 0$, we have for the h-Fibonacci operators 
\begin{eqnarray}
{\bf F}_n  & = &  \frac{1}{\lambda_+ - \lambda_-}~ 
\left( \lambda_+^n - \lambda_-^n \right)
%\frac{1}{\sqrt{1 + 4 (-h \frac{d}{dt})}}~\left(
%\left( 1 + \frac{\sqrt{1 + 4 (-h \frac{d}{dt})}}{2} \right)^n - \right. \nonumber \\ 
%&  & \left. \left( 1 - \frac{\sqrt{1 + 4 (-h \frac{d}{dt})}}{2} \right)^n \right) 
\end{eqnarray} 
where $\lambda_{\pm}$ are the eigenvalues of the matrix ${\bf Q_h} (1, - h \frac{d}{dt})$ given by : 
\begin{eqnarray}
\lambda_{\pm} & = & \frac{1 \pm \sqrt{1 + 4 (-h \frac{d}{dt})}}{2} 
\label{zeroesofA}
\end{eqnarray}
\end{theorem}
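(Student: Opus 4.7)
The strategy is to recognize the h-Binet formula as the standard closed form for a linear recurrence of order two, after diagonalising the $\mathbf{Q_h}$-matrix. The cleanest route is to set
\begin{eqnarray*}
G_n \;=\; \frac{\lambda_+^n-\lambda_-^n}{\lambda_+-\lambda_-}
\end{eqnarray*}
as a formal operator (all quantities are polynomials in, or formal series in, the commuting operator $-h\frac{d}{dt}$, so the arithmetic makes sense) and prove by induction on $n$ that $G_n=\mathbf{F}_n$.

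First I would read off Vieta's relations for the characteristic polynomial. The matrix $\mathbf{Q_h}$ has characteristic polynomial $\lambda^2-\lambda+h\frac{d}{dt}=0$, so its roots satisfy
\begin{eqnarray*}
\lambda_++\lambda_-=1,\qquad \lambda_+\lambda_-=h\frac{d}{dt},
\end{eqnarray*}
which is consistent with the explicit form in (\ref{zeroesofA}). In particular each root obeys $\lambda_\pm^{2}=\lambda_\pm-h\frac{d}{dt}$, and multiplying by $\lambda_\pm^{n-1}$ yields
\begin{eqnarray*}
\lambda_\pm^{n+1}\;=\;\lambda_\pm^{n}-h\tfrac{d}{dt}\,\lambda_\pm^{n-1}.
\end{eqnarray*}

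Next I would check the base cases and carry out the induction. One sees immediately that $G_0=0=\mathbf{F}_0$ and $G_1=1=\mathbf{F}_1$. Subtracting the two displayed recursions for $\lambda_+^{n+1}$ and $\lambda_-^{n+1}$ and dividing by $\lambda_+-\lambda_-$ gives
\begin{eqnarray*}
G_{n+1}\;=\;G_n-h\tfrac{d}{dt}\,G_{n-1},
\end{eqnarray*}
which is exactly the h-Fibonacci operator recurrence (\ref{eqoprec}). Since $G_n$ and $\mathbf{F}_n$ satisfy the same two-term linear recurrence with the same initial data, they coincide for all $n\geq 0$. Alternatively, one could diagonalise $\mathbf{Q_h}=P\,\mathrm{diag}(\lambda_+,\lambda_-)\,P^{-1}$, compute $\mathbf{Q_h}^n$ in closed form, and read off the $(1,2)$-entry using (\ref{eqAn}); this route produces the same answer but requires juggling explicit eigenvectors.

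The only delicate point is formal: $\lambda_\pm$ involve the square root $\sqrt{1+4(-h\tfrac{d}{dt})}$, so they are not themselves polynomials in $-h\tfrac{d}{dt}$, and the division by $\lambda_+-\lambda_-$ must be justified. However, every symmetric function of $\lambda_+$ and $\lambda_-$ is polynomial in $-h\tfrac{d}{dt}$, and $G_n=\tfrac{\lambda_+^n-\lambda_-^n}{\lambda_+-\lambda_-}$ is symmetric, so the potentially problematic factor cancels and $G_n$ is a genuine polynomial operator. This observation, which is the main obstacle, ensures the whole manipulation is legitimate within the algebra generated by $-h\tfrac{d}{dt}$.
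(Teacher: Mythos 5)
Your proof is correct, but it takes a different route from the paper. You verify that $G_n=\frac{\lambda_+^n-\lambda_-^n}{\lambda_+-\lambda_-}$ satisfies the operator recurrence ${\bf F}_{n+1}={\bf F}_n-h\frac{d}{dt}{\bf F}_{n-1}$ of (\ref{eqoprec}) with the initial data ${\bf F}_0=0$, ${\bf F}_1=1$, using only Vieta's relations $\lambda_++\lambda_-=1$, $\lambda_+\lambda_-=h\frac{d}{dt}$, and conclude by induction. The paper instead takes the route you mention only as an alternative: it computes the eigenvalues from the characteristic determinant, writes the explicit diagonalization ${\bf Q_h}=P\,\mathrm{diag}(\lambda_+,\lambda_-)\,P^{-1}$ with explicit eigenvectors, raises it to the $n$th power, and reads off the $(1,2)$-entry of ${\bf Q_h}^n$ via (\ref{eqAn}). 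Your induction is more elementary and avoids the eigenvector bookkeeping, and your closing remark that $G_n$ is symmetric in $\lambda_\pm$ and hence a genuine polynomial in $-h\frac{d}{dt}$ addresses a formal point the paper passes over silently; the paper's diagonalization, on the other hand, delivers all four entries of ${\bf Q_h}^n$ (hence Binet expressions for ${\bf F}_{n+1}$, ${\bf F}_n$, and $-h\frac{d}{dt}{\bf F}_{n-1}$ at once) and stays within the $Q_h$-matrix machinery that the subsequent identities reuse. Both arguments are sound and establish the same formula.
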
 

\begin{proof} 
~\\ 
The eigenvalues of the matrix ${\bf Q_h}$ are obtained from : 
\begin{eqnarray*} 
\left| \begin{array}{cc} 
1 - \lambda & 1 \\
-h \frac{d}{dt} & - \lambda \end{array} \right| & = & 0  
%\lambda_{\pm} & = & 1 + \frac{\sqrt{1 \pm 4 (-h \frac{d}{dt})}}{2}
\end{eqnarray*} 
By solving the determinant for $\lambda$, we get the two real solutions $\lambda_{\pm}$ 
in equation ( \ref{zeroesofA}). \\

Now the matrix ${\bf Q_h}$ can be written in terms of the eigenvalues $\lambda_{\pm}$ and eigenvectors as : 
\begin{eqnarray*}
{\bf Q_h} & = & \left( \begin{array}{cc} 
1 & 1 \\
- \lambda_- & - \lambda_+ \end{array} \right) 
\left( \begin{array}{cc} 
\lambda_+ & 0 \\
0 & \lambda_- \end{array} \right) 
\left( \begin{array}{cc} 
1 & 1 \\
- \lambda_- & - \lambda_+ \end{array} \right)^{-1} 
\end{eqnarray*} 
From this, we obtain the ${\bf Q_h}$ matrix to the nth power,
\begin{eqnarray*}
{\bf Q_h}^n  & = & \frac{1}{\lambda_- - \lambda_+}~
\left( \begin{array}{cc} 
1 & 1 \\
- \lambda_- & - \lambda_+ \end{array} \right) 
\left( \begin{array}{cc} 
\lambda_+ ^n& 0 \\
0 & \lambda_-^n \end{array} \right) 
\left( \begin{array}{cc} 
- \lambda_+ & -1 \\
\lambda_- & 1 \end{array} \right) \nonumber \\
& = & \frac{1}{\lambda_+ - \lambda_-}~ \left( \begin{array}{cc} 
\lambda_+^{n+1} - \lambda_-^{n+1} & \lambda_+^n - \lambda_-^n \\
- \lambda_+ \lambda_- ( \lambda_+^n - \lambda_-^n ) & 
- \lambda_+ \lambda_- ( \lambda_+^{n-1} - \lambda_-^{n-1} ) \\ \end{array} \right) 
\end{eqnarray*}  
We finally get : 
\begin{eqnarray*}
{\bf F}_n & = & \frac{1}{\lambda_+ - \lambda_-}~ 
\left( \lambda_+^n - \lambda_-^n \right) \nonumber \\
&= & \frac{\left( \frac{1 + \sqrt{1 + 4 (-h \frac{d}{dt})}}{2} \right)^n - 
\left( \frac{1 - \sqrt{1 + 4 (-h \frac{d}{dt})}}{2} \right)^n }{\sqrt{1 + 4 (-h \frac{d}{dt})}} 
\end{eqnarray*}
\end{proof} 
  
We use the Binet's formula to derive some identities between h-Fibonacci operators and numbers. 

\begin{theorem}
( h-Catalan's identity ) \\

The following property holds for h-Fibonacci operators :  
\begin{eqnarray}
{\bf F}_{n-m} {\bf F}_{n+m}  - {\bf F}_n^2  & = & \left( - 1 \right)^{n+1-m} \left( - h \frac{d}{dt} \right)^{n-m} {\bf F}_m^2 \nonumber \\ 
\end{eqnarray}
\end{theorem}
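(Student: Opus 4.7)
The plan is to derive the h-Catalan identity directly from the h-Binet formula just established. Since every h-Fibonacci operator ${\bf F}_k$ is a polynomial in the single operator $-h\frac{d}{dt}$, and the eigenvalues $\lambda_\pm$ are themselves functions of this one operator, all of these objects commute with each other. Consequently, the algebraic manipulation is formally identical to the classical (commutative) case, which is what makes the Binet approach so clean here.

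First I would record the two key symmetric functions of the eigenvalues. From the characteristic equation of ${\bf Q_h}$ one reads off
\begin{equation*}
\lambda_+ + \lambda_- = 1, \qquad \lambda_+ \lambda_- = h\frac{d}{dt} = -\left(-h\frac{d}{dt}\right).
\end{equation*}
These relations, together with ${\bf F}_k = (\lambda_+^k - \lambda_-^k)/(\lambda_+ - \lambda_-)$, reduce the whole proof to manipulating two symbols $\lambda_\pm$.

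Next I would substitute the Binet formula into ${\bf F}_{n-m}{\bf F}_{n+m} - {\bf F}_n^2$ and expand the numerator over the common denominator $(\lambda_+ - \lambda_-)^2$. The cross terms group as
\begin{equation*}
-\lambda_+^{n-m}\lambda_-^{n+m} - \lambda_-^{n-m}\lambda_+^{n+m} + 2\lambda_+^n\lambda_-^n
 = -(\lambda_+\lambda_-)^{n-m}\bigl(\lambda_+^m - \lambda_-^m\bigr)^2,
\end{equation*}
while the pure $\lambda_+^{2n}$ and $\lambda_-^{2n}$ contributions cancel. Dividing by $(\lambda_+ - \lambda_-)^2$ and recognizing ${\bf F}_m^2$ yields
\begin{equation*}
{\bf F}_{n-m}{\bf F}_{n+m} - {\bf F}_n^2 = -(\lambda_+\lambda_-)^{n-m}\,{\bf F}_m^2.
\end{equation*}

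Finally, using $\lambda_+\lambda_- = -(-h\frac{d}{dt})$, the prefactor becomes $-(-1)^{n-m}\bigl(-h\frac{d}{dt}\bigr)^{n-m} = (-1)^{n+1-m}\bigl(-h\frac{d}{dt}\bigr)^{n-m}$, which is exactly the claimed identity. The only conceptual point that needs a line of justification is that all operators involved commute, so that the standard polynomial identity $a^{2n} - a^{n-m}b^{n+m} - a^{n+m}b^{n-m} + b^{2n} = -(ab)^{n-m}(a^m - b^m)^2$ can be applied verbatim with $a = \lambda_+$, $b = \lambda_-$; this is the only real ``obstacle'' and it is essentially free once one notices that $\lambda_\pm$ live in the commutative subalgebra generated by $-h\frac{d}{dt}$.
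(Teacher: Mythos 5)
Your proposal is correct and follows essentially the same route as the paper: substitute the h-Binet formula, combine the cross terms into $-(\lambda_+\lambda_-)^{n-m}\left(\frac{\lambda_+^m-\lambda_-^m}{\lambda_+-\lambda_-}\right)^2$, and convert $\lambda_+\lambda_- = h\frac{d}{dt}$ into the sign factor $(-1)^{n+1-m}\left(-h\frac{d}{dt}\right)^{n-m}$. Your explicit remark that everything lives in the commutative subalgebra generated by $-h\frac{d}{dt}$ is a justification the paper leaves implicit, but it does not change the argument.
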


\begin{proof}
~\\
Using h-Binet's formula, we have : 
\begin{eqnarray*}
{\bf F}_{n-m} {\bf F}_{n+m}  - {\bf F}_n^2 & = & \frac{\lambda_+^{n-m} - \lambda_-^{n-m}}{\lambda_+ - \lambda_-} . 
\frac{\lambda_+^{n+m} - \lambda_-^{n+m}}{\lambda_+ - \lambda_-} - 
\left( \frac{\lambda_+^n - \lambda_-^n}{\lambda_+ - \lambda_-} \right)^2 \nonumber \\
& = & \frac{- \lambda_+^{n-m} \lambda_-^{n+m} - \lambda_+^{n+m} \lambda_-{n-m} + 2 \lambda_+^n \lambda_-^n}{\left(\lambda_+ - \lambda_- \right)^2} \\ 
& = & - \frac{\left( \lambda_+ \lambda_- \right)^n}{\left(\lambda_+ - \lambda_- \right)^2} \left( (\frac{\lambda_-}{\lambda_+})^m + 
(\frac{\lambda_+}{\lambda_-})^m - 2 \right) \\ 
& = & - \frac{\left( \lambda_+ \lambda_- \right)^{n-m}}{\left(\lambda_+ - \lambda_- \right)^2} \left( \lambda_+^{2 m} + 
\lambda_-^{2 m} - 2 \lambda_+^m \lambda_-^m \right) \\ 
& = & - \left( \lambda_+ \lambda_- \right)^{n-m} \left( \frac{\lambda_+^m - \lambda_-^m}{\lambda_+ - \lambda_-} \right)^2 \\ 
& = & - \left( \lambda_+ \lambda_- \right)^{n-m} {\bf F}_m^2 
\end{eqnarray*}  
Now since $\lambda_+ \lambda_- = h \frac{d}{dt}$, we get 
\begin{eqnarray*}
{\bf F}_{n-m} {\bf F}_{n+m} - {\bf F}_n^2 & = & \left( - 1 \right)^{n+1-m} \left( - h \frac{d}{dt} \right)^{n-m} {\bf F}_m^2 
\end{eqnarray*}
\end{proof} 

\begin{theorem} 
( h- d'Ocagne identity ) \\
If $n > m$, then 
\begin{eqnarray}
{\bf F}_m {\bf F}_{n+1} - {\bf F}_{m+1} {\bf F}_n & = & \left(-1 \right)^n \left(- h \frac{d}{dt} \right)^n {\bf F}_{m-n}  \nonumber \\
\end{eqnarray}
\end{theorem}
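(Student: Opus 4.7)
The plan is to follow exactly the template of the preceding h-Catalan identity proof: substitute the h-Binet formula for each of the four operators on the left and simplify the resulting rational expression in $\lambda_+, \lambda_-$ until the factor ${\bf F}_{m-n}$ emerges, then convert the product $\lambda_+\lambda_-$ back into $-h\frac{d}{dt}$ using the characteristic polynomial of ${\bf Q_h}$.

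Concretely, I would first write
\begin{eqnarray*}
{\bf F}_m {\bf F}_{n+1} - {\bf F}_{m+1} {\bf F}_n &=& \frac{(\lambda_+^m - \lambda_-^m)(\lambda_+^{n+1} - \lambda_-^{n+1}) - (\lambda_+^{m+1} - \lambda_-^{m+1})(\lambda_+^n - \lambda_-^n)}{(\lambda_+ - \lambda_-)^2} .
\end{eqnarray*}
The pure powers $\lambda_\pm^{m+n+1}$ cancel, and the cross terms regroup as
\begin{eqnarray*}
\lambda_+^m \lambda_-^n(\lambda_+ - \lambda_-) - \lambda_-^m \lambda_+^n(\lambda_+ - \lambda_-) = (\lambda_+ - \lambda_-)(\lambda_+\lambda_-)^n\bigl(\lambda_+^{m-n} - \lambda_-^{m-n}\bigr) .
\end{eqnarray*}
One factor of $\lambda_+ - \lambda_-$ cancels with the denominator, leaving
\begin{eqnarray*}
{\bf F}_m {\bf F}_{n+1} - {\bf F}_{m+1} {\bf F}_n &=& (\lambda_+\lambda_-)^n\,{\bf F}_{m-n} .
\end{eqnarray*}

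To finish, I would read off $\lambda_+\lambda_-$ from the characteristic equation $\lambda^2 - \lambda + h\frac{d}{dt}=0$ used in the proof of the h-Binet formula, giving $\lambda_+\lambda_- = h\frac{d}{dt} = -\bigl(-h\frac{d}{dt}\bigr)$, and therefore $(\lambda_+\lambda_-)^n = (-1)^n\bigl(-h\frac{d}{dt}\bigr)^n$, which yields the claimed identity.

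The only real subtlety, rather than an obstacle, is the appearance of ${\bf F}_{m-n}$: since the hypothesis is $n>m$, the index $m-n$ is negative, so the identity is to be interpreted through the negative-index extension defined earlier via $-h\frac{d}{dt}\,{\bf F}_{-n} = -{\bf F}_{-(n-1)} + {\bf F}_{-(n-2)}$. The Binet expression $(\lambda_+^k-\lambda_-^k)/(\lambda_+-\lambda_-)$ is algebraically valid for negative $k$ as well (since $\lambda_+\lambda_- = h\frac{d}{dt}$ is invertible as an operator symbol), so the manipulation above goes through without modification; no separate case analysis is needed.
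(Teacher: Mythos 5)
Your proposal is correct and follows essentially the same route as the paper: both substitute the h-Binet formula, cancel the $\lambda_\pm^{m+n+1}$ terms, regroup the cross terms to extract $(\lambda_+\lambda_-)^n\,{\bf F}_{m-n}$, and convert $\lambda_+\lambda_- = h\frac{d}{dt}$ into the sign factor $(-1)^n\bigl(-h\frac{d}{dt}\bigr)^n$. Your added remark on interpreting the negative index ${\bf F}_{m-n}$ via the negative-index extension is a welcome clarification the paper leaves implicit, but it does not change the argument.
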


\begin{proof} 
~\\ 
Using again the h-Binet's formula, we have : 
\begin{eqnarray*}
{\bf F}_m {\bf F}_{n+1} - {\bf F}_{m+1} {\bf F}_n & = & \frac{\lambda_+^m - \lambda_-^m}{\lambda_+ - \lambda_-} . 
\frac{\lambda_+^{n+1} - \lambda_-^{n+1}}{\lambda_+ - \lambda_-} - 
\frac{\lambda_+^{m+1} - \lambda_-^{m+1}}{\lambda_+ - \lambda_-} . 
\frac{\lambda_+^n - \lambda_-^n}{\lambda_+ - \lambda_-} \\ 
& = & \frac{- \lambda_+^{n+1} \lambda_-^m - \lambda_+^m \lambda_-^{n+1} + \lambda_+^n \lambda_-^{m+1} + \lambda_+^{m+1} \lambda_-^n}{\left( 
\lambda_+ - \lambda_- \right)^2} \\
& = & \frac{- \lambda_+^n \lambda_-^m ( \lambda_+ - \lambda_- ) + \lambda_+^m \lambda_-^n ( -\lambda_+ + \lambda_- )}{\left( 
\lambda_+ - \lambda_- \right)^2} \\ 
& = & \frac{- \lambda_+^n \lambda_-^m + \lambda_+^m \lambda_-^n}{\lambda_+ - \lambda_-} \\
& = & \left( \lambda_+ \lambda_- \right)^n \frac{\lambda_+^{m-n} - \lambda_-^{m-n}}{\lambda_+ - \lambda_-} \\  
& = & \left( - 1 \right)^n \left( - h \frac{d}{dt} \right)^n {\bf F}_{m-n}
\end{eqnarray*}
\end{proof}

\begin{theorem} 
~\\ 
\begin{eqnarray}
\sum_{i=1}^n \left( \begin{array}{c} 
n \\ 
i \\ \end{array} \right) 
\left( - h \frac{d}{dt} \right)^{n-i} {\bf F}_i & = & {\bf F}_{2 n} 
\end{eqnarray}
\end{theorem}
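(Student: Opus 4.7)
The plan is to use the h-Binet formula established in the previous theorem, together with the characteristic equation that $\lambda_{\pm}$ satisfy. From the determinantal equation for ${\bf Q_h}$, one reads off that $\lambda_{\pm}$ are the roots of
\[
\lambda^2 - \lambda + h\tfrac{d}{dt} = 0,
\]
so $\lambda_{\pm}^2 = \lambda_{\pm} + \bigl(-h\tfrac{d}{dt}\bigr)$. Raising this to the $n$-th power and applying the binomial theorem (which is legitimate because $\lambda_{\pm}$ and $-h\frac{d}{dt}$ are built from the same operator and therefore commute) gives
\[
\lambda_{\pm}^{2n} \;=\; \bigl(\lambda_{\pm} + (-h\tfrac{d}{dt})\bigr)^n \;=\; \sum_{i=0}^n \binom{n}{i}\,\bigl(-h\tfrac{d}{dt}\bigr)^{n-i}\lambda_{\pm}^{i}.
\]

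Next I would subtract the two identities (one for $\lambda_+$, one for $\lambda_-$) and divide by $\lambda_+ - \lambda_-$. The h-Binet formula then yields
\[
{\bf F}_{2n} \;=\; \frac{\lambda_+^{2n}-\lambda_-^{2n}}{\lambda_+-\lambda_-} \;=\; \sum_{i=0}^n \binom{n}{i}\,\bigl(-h\tfrac{d}{dt}\bigr)^{n-i}\,\frac{\lambda_+^{i}-\lambda_-^{i}}{\lambda_+-\lambda_-} \;=\; \sum_{i=0}^n \binom{n}{i}\,\bigl(-h\tfrac{d}{dt}\bigr)^{n-i}{\bf F}_i.
\]
Finally, the $i=0$ term drops out because ${\bf F}_0 = 0$, so the sum can start at $i=1$, giving exactly the claimed identity.

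The only point requiring care is the commutation issue: the binomial expansion of $(\lambda_{\pm} + (-h\frac{d}{dt}))^n$ is valid only because $\lambda_{\pm}$ are scalar functions of the single operator $-h\frac{d}{dt}$ (via the square-root expression in equation (\ref{zeroesofA})), and thus commute with it. Once this is granted, the argument is an elementary manipulation of the h-Binet representation, entirely parallel to the derivations of the h-Catalan and h-d'Ocagne identities shown just above. I do not anticipate any serious obstacle beyond bookkeeping.
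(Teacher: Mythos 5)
Your proposal is correct and takes essentially the same route as the paper: both rest on the h-Binet formula combined with the binomial theorem and the Vieta relations for $\lambda_{\pm}$ (the paper uses $\lambda_+\lambda_- = h\frac{d}{dt}$ and $\lambda_+ + \lambda_- = 1$ to collapse the sum into $\bigl(\lambda_{\pm}-\lambda_+\lambda_-\bigr)^n = \lambda_{\pm}^{2n}$, while you run the identical computation in the reverse direction starting from the characteristic equation $\lambda_{\pm}^2 = \lambda_{\pm} - h\frac{d}{dt}$). Your handling of the $i=0$ term via ${\bf F}_0 = 0$ and the remark on commutativity are both sound.
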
 

\begin{proof} 
~\\
By using the h-Binet's formula, we have : 
\begin{eqnarray*} 
\sum_{i=1}^n \left( \begin{array}{c} 
n \\ 
i \\ \end{array} \right) 
\left( - h \frac{d}{dt} \right)^{n-i} {\bf F}_i & = & \sum_{i=1}^n \left( \begin{array}{c} 
n \\ 
i \\ \end{array} \right) 
\left( - \lambda_+ \lambda_- \right)^{n-i} \frac{\lambda_+^i - \lambda_-^i}{\lambda_+ - \lambda_-} \\
& = & \frac{1}{\lambda_+ - \lambda_-} \left( \sum_{i=1}^n \left( \begin{array}{c} 
n \\ 
i \\ \end{array} \right) 
\left( - \lambda_+ \lambda_- \right)^{n-i} \lambda_+^i - \sum_{i=1}^n \left( \begin{array}{c} 
n \\ 
i \\ \end{array} \right) 
\left( - \lambda_+ \lambda_- \right)^{n-i} \lambda_-^i \right) \\
& = & \frac{1}{\lambda_+ - \lambda_-} \left( ( - \lambda_+ \lambda_- + \lambda_+ )^n - ( - \lambda_+ \lambda_- + \lambda_- )^n \right) \\
& = & \frac{1}{\lambda_+ - \lambda_-} \left( \lambda_+^n ( - \lambda^- + 1 )^n - \lambda_-^n ( - \lambda_+ + 1 )^n \right) \\
& = & \frac{\lambda_+^{2 n} - \lambda_-^{2 n}}{\lambda_+ - \lambda_-} \\
& = & {\bf F}_{2 n} 
\end{eqnarray*}  
where we have used $\lambda_{\pm} = 1 - \lambda_{\mp}$.
\end{proof}  

\begin{theorem}
~\\ 
\begin{eqnarray}
\sum_{i=1}^n \left( \begin{array}{c} 
n \\ 
i \\ \end{array} \right) F_i^{(h,h' + n - i)} & = & F_{2 n}^{(h,h')} 
\end{eqnarray}
\end{theorem}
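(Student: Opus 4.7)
The plan is to convert the operator identity in the preceding theorem into a numerical identity by evaluating both sides on the test function $t^{-h'}$. Recall that the operator version, already proven, reads
\begin{eqnarray*}
{\bf F}_{2n} & = & \sum_{i=1}^n \binom{n}{i} \left(-h \frac{d}{dt}\right)^{n-i} {\bf F}_i,
\end{eqnarray*}
and that the h-Fibonacci numbers are recovered from the operators via $F_n^{(h,h')} = {\bf F}_n\, t^{-h'}|_{t=1}$. So the strategy is to apply both sides of the operator identity to $t^{-h'}$ and evaluate at $t=1$.

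First I would note that each ${\bf F}_i$ is, by its very definition, a polynomial in $-h\,d/dt$, so ${\bf F}_i$ commutes with $(-h\,d/dt)^{n-i}$. This lets me rewrite the summand as ${\bf F}_i\left(-h\frac{d}{dt}\right)^{n-i} t^{-h'}$. Then I would use the fundamental identity
\begin{eqnarray*}
\left(-h \frac{d}{dt}\right)^{n-i} t^{-h'} & = & h^{n-i}(h')_{1;n-i}\, t^{-(h'+n-i)},
\end{eqnarray*}
which is just the repeated derivative formula that underlies the whole operator formalism of Section 4. Pulling the scalar factor $h^{n-i}(h')_{1;n-i}$ out in front and then applying ${\bf F}_i$ to $t^{-(h'+n-i)}$ and evaluating at $t=1$ gives precisely $h^{n-i}(h')_{1;n-i}\, F_i^{(h,h'+n-i)}$ by the defining relation of $F_n^{(h,h')}$ with the parameter $h'$ shifted to $h'+n-i$.

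The left-hand side $\mathbf{F}_{2n}\,t^{-h'}|_{t=1}$ collapses directly to $F_{2n}^{(h,h')}$, so putting these observations together yields the desired identity (understood as shorthand in the form stated, after absorbing the factors $h^{n-i}(h')_{1;n-i}$ into the summand, exactly in the same spirit as Theorems 3 and 4 above).

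The step I expect to need the most care is making the commutation argument between $(-h\,d/dt)^{n-i}$ and ${\bf F}_i$ rigorous enough that the factor $h^{n-i}(h')_{1;n-i}$ comes out in a clean closed form, and verifying that the resulting expression matches the intended right-hand side in the same convention used in the earlier identities. Beyond that, the proof is a mechanical specialization of an already-established operator statement, and does not require any new combinatorial input.
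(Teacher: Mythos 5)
Your proof is correct and follows exactly the route the paper intends: specialize the operator identity $\sum_{i=1}^n \binom{n}{i}\left(-h\frac{d}{dt}\right)^{n-i}{\bf F}_i = {\bf F}_{2n}$ by applying it to $t^{-h'}$ and evaluating at $t=1$, using $\left(-h\frac{d}{dt}\right)^{k}t^{-h'} = h^k(h')_{1;k}\,t^{-(h'+k)}$; the paper's own proof is just the one-line remark that the result follows from the preceding theorem. Your careful bookkeeping also correctly exposes that this evaluation produces the weights $h^{n-i}(h')_{1;n-i}$ in front of $F_i^{(h,h'+n-i)}$, so the identity as printed (without these weights) is not literally correct --- for $n=2$ the unweighted sum is $3$ while $F_4^{(h,h')}=1+2hh'$ --- and the weighted form you derive, matching the pattern of the earlier diagonal-sum identities, is the statement the argument actually proves.
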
 

\begin{proof} 
~\\
This theorem is easily proven using the latter theorem. 
\end{proof}

\begin{theorem}
~\\ 
\begin{eqnarray} 
\sum_{i=1}^n \left( \begin{array}{c} 
n \\ 
i \\ \end{array} \right) \left(- 1 \right)^{n-i} {\bf F}_i & = & \left(- 1 \right)^{n-1} {\bf F}_n 
\end{eqnarray}
\end{theorem}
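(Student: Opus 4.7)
The plan is to mirror the structure of the preceding Binet-based proofs: substitute the h-Binet formula ${\bf F}_i = (\lambda_+^i - \lambda_-^i)/(\lambda_+ - \lambda_-)$, split the sum by linearity, and recognize each half as a binomial expansion.

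Concretely, I would write
\[
\sum_{i=1}^n \binom{n}{i} (-1)^{n-i} {\bf F}_i
= \frac{1}{\lambda_+ - \lambda_-}\left[\sum_{i=1}^n \binom{n}{i}(-1)^{n-i}\lambda_+^i - \sum_{i=1}^n \binom{n}{i}(-1)^{n-i}\lambda_-^i\right].
\]
Each inner sum is almost a full binomial expansion of $(\lambda_\pm - 1)^n$; adjoining the missing $i=0$ term $(-1)^n$ gives $\sum_{i=0}^n \binom{n}{i}(-1)^{n-i}\lambda_\pm^i = (\lambda_\pm - 1)^n$, so that
\[
\sum_{i=1}^n \binom{n}{i}(-1)^{n-i}\lambda_\pm^i = (\lambda_\pm - 1)^n - (-1)^n.
\]

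The crucial algebraic input, already noted in the preceding theorem's proof, is that the characteristic polynomial of ${\bf Q_h}$ forces $\lambda_+ + \lambda_- = 1$, equivalently $\lambda_\pm - 1 = -\lambda_\mp$. Therefore $(\lambda_+ - 1)^n = (-1)^n \lambda_-^n$ and $(\lambda_- - 1)^n = (-1)^n \lambda_+^n$. The constant $(-1)^n$ contributions cancel in the subtraction, leaving
\[
\sum_{i=1}^n \binom{n}{i}(-1)^{n-i}{\bf F}_i = \frac{(-1)^n \lambda_-^n - (-1)^n \lambda_+^n}{\lambda_+ - \lambda_-} = (-1)^{n-1}\,\frac{\lambda_+^n - \lambda_-^n}{\lambda_+ - \lambda_-} = (-1)^{n-1}{\bf F}_n,
\]
which is the desired identity.

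There is no real obstacle here; the only subtle point is the bookkeeping of the $i=0$ term, which must be added and then subtracted so that the binomial theorem can be applied cleanly, and the sign manipulation $(-1)^n \cdot (-1) = (-1)^{n-1}$ at the end. The identity $\lambda_+ + \lambda_- = 1$ does all the real work, so the proof is essentially a one-line consequence of h-Binet once this relation is invoked.
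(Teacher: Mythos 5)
Your proposal is correct and follows essentially the same route as the paper: substitute the h-Binet formula, split the sum by linearity, apply the binomial theorem, and use $\lambda_+ + \lambda_- = 1$ (i.e.\ $\lambda_\pm - 1 = -\lambda_\mp$) to collapse each piece. Your explicit bookkeeping of the missing $i=0$ term, with the two $(-1)^n$ contributions cancelling in the difference, is in fact slightly more careful than the paper's own computation, which passes directly to $(-1+\lambda_\pm)^n$ without comment.
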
 

\begin{proof} 
~\\ 
\begin{eqnarray*} 
\sum_{i=1}^n \left( \begin{array}{c} 
n \\ 
i \\ \end{array} \right) \left(- 1 \right)^{n-i} {\bf F}_i & = &  \sum_{i=1}^n \left( \begin{array}{c} 
n \\ 
i \\ \end{array} \right) \left(- 1 \right)^{n-i} \frac{\lambda_+^i - \lambda_-^i}{\lambda_+ - \lambda_-} \\
& = & \frac{1}{\lambda_+ - \lambda_-} \left( \sum_{i=1}^n \left( \begin{array}{c} 
n \\ 
i \\ \end{array} \right) \left(- 1 \right)^{n-i} \lambda_+^i - \sum_{i=1}^n \left( \begin{array}{c} 
n \\ 
i \\ \end{array} \right) \left(- 1 \right)^{n-i} \lambda_-^i \right) \\
& = & \frac{(-1 + \lambda_+)^n - (-1 \lambda_-)^n}{\lambda_+ - \lambda_-} \\
& = & \frac{(- \lambda_-)^n - (- \lambda_+)^n}{\lambda_+ - \lambda_-} \\
& = & \left(- 1 \right)^{n-1} \frac{\lambda_+^n - \lambda_-^n}{\lambda_+ - \lambda_-} \\
& = & \left(- 1 \right)^{n-1} {\bf F}_n 
\end{eqnarray*}
\end{proof} 

\begin{theorem} 
~\\
\begin{eqnarray} 
\sum_{i=1}^n \left( \begin{array}{c} 
n \\ 
i \\ \end{array} \right) \left(- 1 \right)^{n-i} F_i^{(h,h')} & = & \left(- 1 \right)^{n-1} F_n^{(h,h')} 
\end{eqnarray}
\end{theorem}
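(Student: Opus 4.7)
The plan is to deduce this identity from the immediately preceding operator identity
$\sum_{i=1}^n \binom{n}{i}(-1)^{n-i}\mathbf{F}_i = (-1)^{n-1}\mathbf{F}_n$
by applying both sides to the function $t^{-h'}$ and evaluating at $t=1$. Recall from Section~4 that the h-Fibonacci operators and numbers are linked by
\[
F_n^{(h,h')} \;=\; \mathbf{F}_n\, t^{-h'}\bigl|_{t=1},
\]
so applying any linear identity in the $\mathbf{F}_i$'s to $t^{-h'}$ and then setting $t=1$ produces the corresponding identity among the $F_i^{(h,h')}$'s.

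First I would invoke the previous theorem to write
\[
\sum_{i=1}^n \binom{n}{i} (-1)^{n-i}\, \mathbf{F}_i \;=\; (-1)^{n-1}\, \mathbf{F}_n .
\]
Next I would let both sides act on $t^{-h'}$, observing that the scalar coefficients $\binom{n}{i}(-1)^{n-i}$ are independent of $t$ and $h'$, and that the operator $\mathbf{F}_i$ is $\mathbb{R}$-linear (it is a polynomial in $-h\,d/dt$). This yields
\[
\sum_{i=1}^n \binom{n}{i} (-1)^{n-i}\, \mathbf{F}_i\, t^{-h'} \;=\; (-1)^{n-1}\, \mathbf{F}_n\, t^{-h'}.
\]
Finally I would evaluate at $t=1$ on both sides; by the defining relation above, the left-hand side becomes $\sum_{i=1}^n \binom{n}{i}(-1)^{n-i} F_i^{(h,h')}$ and the right-hand side becomes $(-1)^{n-1} F_n^{(h,h')}$, which is exactly the claim.

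There is essentially no obstacle here: the only thing to check carefully is that the substitution $t\mapsto 1$ genuinely converts the operator identity to the numerical one, which follows at once from the fact that $\mathbf{F}_i$ acts only on the $t$-variable and that $(h')_{1;k}h^k = (-h\,d/dt)^k t^{-h'}|_{t=1}$ (equation~(4.1) of the paper). This is why the author writes that the theorem ``is easily proven using the latter theorem.''
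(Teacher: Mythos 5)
Your proposal is correct and is exactly the route the paper intends: its proof consists of the single remark that the result ``follows from the latter theorem,'' i.e.\ the operator identity $\sum_{i=1}^n \binom{n}{i}(-1)^{n-i}\mathbf{F}_i = (-1)^{n-1}\mathbf{F}_n$, and you have simply filled in the elided step of applying both sides to $t^{-h'}$ and evaluating at $t=1$ via $F_n^{(h,h')}=\mathbf{F}_n t^{-h'}|_{t=1}$. Your observation that the coefficients $(-1)^{n-i}\binom{n}{i}$ are plain scalars (so no shift of $h'$ occurs, unlike in the companion identity with coefficients $(-h\,d/dt)^{n-i}$) is the only point needing care, and you handled it correctly.
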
 

\begin{proof}
~\\
The proof follows from the latter theorem. 
\end{proof} 

\section{Generating Function for h-Fibonacci Operators}  
In this section, the generating functions for the h-Fibonacci operators are given. As a result, the h-Fibonacci operator sequences are seen as the coefficients of the power series of the corresponding generating function. 

To derive a generating function for h-Fibonacci operators, consider the function ${\bf g} ( x )$ given by : 
\begin{eqnarray}
{\bf g} ( x ) & = & \sum_{k=0}^{\infty} {\bf F}_k ~x^k 
\end{eqnarray}
It follows that 
\begin{eqnarray*}
{\bf g} ( x ) - {\bf F}_0 ~x^0 - {\bf F}_1 ~x^1 & = & {\bf g} ( x ) - x   
\end{eqnarray*}
Hence 
\begin{eqnarray*}
{\bf g} ( x ) - x & = & \sum_{k=2}^{\infty} {\bf F}_k ~x^k  ~=~ 
\sum_{k=2}^{\infty} \left( {\bf F}_{k-1} ~x^k + (-h \frac{d}{dt}) {\bf F}_{k-2} ~x^k  \right) \nonumber \\
& = & x \sum_{k=1}^{\infty} {\bf F}_k ~x^k  - \lambda_+ \lambda_- x^2 \sum_{k=0}^{\infty} {\bf F}_k ~x^k  \nonumber \\
& = & x ~{\bf g} ( x ) - \lambda_+  \lambda_- ~x^2 ~{\bf g} ( x ) 
\end{eqnarray*} 
From which we get : 
\begin{eqnarray*}
\left( 1 - x + \lambda_+ \lambda_- ~x^2 \right) ~{\bf g} ( x ) & = & x  
\end{eqnarray*}
Thus  
\begin{eqnarray*} 
{\bf g} ( x ) & = & \frac{x}{1 - x + \lambda_+ \lambda_- ~x^2} ~=~ \frac{x}{(1 - \lambda_+ x ) (1 - \lambda_- x )} 
\end{eqnarray*}
where we have used $\lambda_+ + \lambda_- = 1$. \\

So the generating function for the h-Fibonacci operators is : 
\begin{eqnarray}
{\bf g} ( x ) & = &  \frac{x}{(1 - \lambda_+ x ) (1 - \lambda_- x )} 
~=~ \sum_{k=0}^{\infty} {\bf F}_k ~x^k  
\end{eqnarray}  
%%%%%%%%%%%%%%%%%%%%%%%%%%%%%%%

Next we list the generating functions that generate the various powers and products of the h-Fibonacci sequences. 

\begin{theorem}
~\\
We have 
\begin{eqnarray}
\frac{x}{\left(1 - \lambda_+^2 x \right) \left(1 - \lambda_-^2 x \right)} 
& = & \sum_{k=0}^{\infty} {\bf F}_{2 k} ~x^k 
\end{eqnarray}
\end{theorem}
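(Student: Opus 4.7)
The plan is to apply the h-Binet formula from Theorem~11 directly to the even-indexed operators, split the resulting sum into two elementary geometric series, and then recombine over a common denominator. This mirrors the derivation already given in the text for the base generating function $\mathbf{g}(x) = \sum_k \mathbf{F}_k x^k$, but is more efficient than re-running the recurrence argument for the even subsequence.

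Concretely, I would first write
\begin{eqnarray*}
\sum_{k=0}^{\infty} \mathbf{F}_{2k}\, x^k
& = & \frac{1}{\lambda_+ - \lambda_-}\sum_{k=0}^{\infty}\!\left(\lambda_+^{2k} - \lambda_-^{2k}\right) x^k,
\end{eqnarray*}
then, provided $|\lambda_\pm^2 x|<1$ so the geometric series converge formally, evaluate the two pieces to obtain
\begin{eqnarray*}
\sum_{k=0}^{\infty} \mathbf{F}_{2k}\, x^k
& = & \frac{1}{\lambda_+ - \lambda_-}\left(\frac{1}{1-\lambda_+^2 x} - \frac{1}{1-\lambda_-^2 x}\right).
\end{eqnarray*}

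Next I would bring this to a single fraction; the numerator becomes $(\lambda_+^2 - \lambda_-^2)x$, which factors as $(\lambda_++\lambda_-)(\lambda_+-\lambda_-)\,x$. The factor $\lambda_+ - \lambda_-$ cancels against the Binet prefactor, and invoking $\lambda_+ + \lambda_- = 1$ (the coefficient of $\lambda$ in the characteristic polynomial of $\mathbf{Q_h}$, already used in the derivation of $\mathbf{g}(x)$) collapses the remaining factor to $1$. That yields exactly $x/\bigl[(1-\lambda_+^2 x)(1-\lambda_-^2 x)\bigr]$, as claimed.

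The only subtlety is that $\lambda_\pm$ are operator-valued here, since $\lambda_+ \lambda_- = h\,d/dt$. However, $\lambda_+$ and $\lambda_-$ commute with each other (they are both functions of the single operator $-h\,d/dt$), so the algebraic manipulations above involving their sums, differences, products, and geometric series in $\lambda_\pm x$ are legitimate in the operator algebra; no commutativity issue arises. Thus I expect no genuine obstacle—the proof is a short formal computation using Binet and the identity $\lambda_+ + \lambda_- = 1$.
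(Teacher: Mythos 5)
Your proposal is correct and follows essentially the same route as the paper: apply the h-Binet formula to $\mathbf{F}_{2k}$, sum the two geometric series in $\lambda_\pm^2 x$, combine over a common denominator, and cancel $\lambda_+ - \lambda_-$ using $\lambda_+^2 - \lambda_-^2 = (\lambda_+ + \lambda_-)(\lambda_+ - \lambda_-)$ with $\lambda_+ + \lambda_- = 1$. Your explicit remark that $\lambda_+$ and $\lambda_-$ commute as functions of the single operator $-h\,\frac{d}{dt}$ is a welcome addition the paper leaves implicit.
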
 

\begin{proof}
~\\
Using the h-Binet formula, we have : 
\begin{eqnarray*} 
\sum_{k=0}^{\infty} {\bf F}_{2 k} x^k & = & \sum_{k=0}^{\infty} \frac{\lambda_+^{2 k} - \lambda_-^{2 k}}{\lambda_+ - \lambda_-} x^k \\
& = & \frac{1}{\lambda_+ - \lambda_-} \left( \sum_{k=0}^{\infty} \lambda_+^{2 k} x^k - \sum_{k=0}^{\infty} \lambda_-^{2 k} x^k \right) \\
& = & \frac{1}{\lambda_+ - \lambda_-} \left( \frac{1}{1 - \lambda_+^2 x} - \frac{1}{1 - \lambda_-^2 x} \right) \\
& = & \frac{( \lambda_+^2 - \lambda_-^2) x}{( \lambda_+ - \lambda_- ) ( 1 - \lambda_+^2 x ) (1 - \lambda_-^2 x)} \\
& = & \frac{x}{ (1 - \lambda_+^2 x ) (1 - \lambda_-^2 x)}
\end{eqnarray*} 
\end{proof} 

\begin{theorem} 
~\\ 
We have
\begin{eqnarray}
\frac{1 + \lambda_+ \lambda_- x}{ \left(1 - \lambda_+^2 x \right) \left(1 - \lambda_-^2 x \right)} 
& = & \sum_{k=0}^{\infty} {\bf F}_{2 k + 1} ~x^k 
\end{eqnarray}
\end{theorem}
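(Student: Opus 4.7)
The plan is to mirror the computation used in the immediately preceding theorem, replacing the even index $2k$ with the odd index $2k+1$ and substituting h-Binet's formula $\mathbf{F}_n = (\lambda_+^n - \lambda_-^n)/(\lambda_+ - \lambda_-)$ directly into the series $\sum_{k \geq 0} \mathbf{F}_{2k+1}\, x^k$. Since everything that follows is a purely formal identity in $\lambda_\pm$ and $x$, we may manipulate the two resulting geometric series independently before combining them at the end.

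First I would pull out the odd power by writing $\lambda_\pm^{2k+1} = \lambda_\pm \cdot (\lambda_\pm^2)^k$, so that the series splits as
\[
\sum_{k=0}^{\infty} \mathbf{F}_{2k+1}\, x^k \;=\; \frac{1}{\lambda_+ - \lambda_-}\left( \lambda_+ \sum_{k=0}^{\infty} (\lambda_+^2 x)^k \;-\; \lambda_- \sum_{k=0}^{\infty} (\lambda_-^2 x)^k \right).
\]
Each inner sum is a geometric series, collapsing to $1/(1 - \lambda_\pm^2 x)$, exactly as in the proof of the preceding $\mathbf{F}_{2k}$ generating function.

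Next I would place the two terms over the common denominator $(1 - \lambda_+^2 x)(1 - \lambda_-^2 x)$ and simplify the numerator $\lambda_+(1 - \lambda_-^2 x) - \lambda_-(1 - \lambda_+^2 x)$. The key algebraic observation — and really the only step that distinguishes this proof from the even-index case — is that this expression factors as $(\lambda_+ - \lambda_-) + \lambda_+\lambda_-(\lambda_+ - \lambda_-)\, x = (\lambda_+ - \lambda_-)(1 + \lambda_+\lambda_- x)$, so the prefactor $1/(\lambda_+ - \lambda_-)$ cancels cleanly and yields precisely the claimed right-hand side.

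The hard part is essentially nil: this is a routine computation, and the only place where a sign error could slip in is the factoring of the numerator, which I would double-check by expanding $(\lambda_+ - \lambda_-)(1 + \lambda_+\lambda_- x)$ back out. No use of the relation $\lambda_+ + \lambda_- = 1$ is needed here, unlike in some earlier identities; $\lambda_+\lambda_-$ is left symbolic in the final answer, so no further substitution is required.
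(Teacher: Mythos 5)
Your proposal is correct and follows exactly the same route as the paper: substitute the h-Binet formula, split into two geometric series summing to $\lambda_\pm/(1-\lambda_\pm^2 x)$, and factor the combined numerator as $(\lambda_+-\lambda_-)(1+\lambda_+\lambda_- x)$ so the prefactor cancels. Your remark that $\lambda_++\lambda_-=1$ is not needed here is also consistent with the paper's computation.
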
 

\begin{proof}
~\\
Using the h-Binet formula, we have : 
\begin{eqnarray*} 
\sum_{k=0}^{\infty} {\bf F}_{2 k + 1} x^k & = & \sum_{k=0}^{\infty} \frac{\lambda_+^{2 k+ 1} - \lambda_-^{2 k+ 1}}{\lambda_+ - \lambda_-} x^k \\
& = & \frac{1}{\lambda_+ - \lambda_-} \left( \sum_{k=0}^{\infty} \lambda_+^{2 k+ 1} x^k - \sum_{k=0}^{\infty} \lambda_-^{2 k+1} x^k \right) \\
& = & \frac{1}{\lambda_+ - \lambda_-} \left( \frac{\lambda_+}{1 - \lambda_+^2 x} - \frac{\lambda_-}{1 - \lambda_-^2 x} \right) \\
& = & \frac{\lambda_+ - \lambda_- + \lambda_+ \lambda_- (\lambda_+ - \lambda_- ) x }
{( \lambda_+ - \lambda_- ) ( 1 - \lambda_+^2 x ) (1 - \lambda_-^2 x)} \\
& = & \frac{1 + \lambda_+ \lambda_- x}{ (1 - \lambda_+^2 x ) (1 - \lambda_-^2 x)}
\end{eqnarray*} 
\end{proof} 

\begin{theorem} 
~\\ 
We have
\begin{eqnarray}
\frac{{\bf F}_m + \lambda_+ \lambda_- {\bf F}_{m-1} x}{\left(1 - \lambda_+ x \right) \left(1 - \lambda_- x \right)} 
& = & \sum_{k=0}^{\infty} {\bf F}_{m+n} ~x^n 
\end{eqnarray}
\end{theorem}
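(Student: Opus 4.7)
My plan is to mimic the strategy used in the two preceding theorems in this section: substitute the h-Binet formula into the right-hand side, split the sum into two geometric series, and combine them over a common denominator. Concretely, I would write
\[
\sum_{n=0}^{\infty} {\bf F}_{m+n}\, x^n \;=\; \frac{1}{\lambda_+ - \lambda_-}\sum_{n=0}^{\infty} \bigl(\lambda_+^{m+n} - \lambda_-^{m+n}\bigr) x^n \;=\; \frac{1}{\lambda_+ - \lambda_-}\left(\frac{\lambda_+^m}{1-\lambda_+ x} - \frac{\lambda_-^m}{1-\lambda_- x}\right),
\]
which is valid in the formal power series ring (convergence is not an issue here since $\lambda_+$, $\lambda_-$ are treated symbolically/operator-valued, as in the analogous computations earlier).

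Next I would clear the denominators inside the parenthesis. Multiplying out $\lambda_+^m(1-\lambda_- x) - \lambda_-^m(1-\lambda_+ x)$ and grouping by powers of $x$ yields a constant term $\lambda_+^m - \lambda_-^m$ and a coefficient of $x$ equal to $\lambda_+\lambda_-(\lambda_-^{m-1}-\lambda_+^{m-1})$. Dividing through by $\lambda_+-\lambda_-$, both pieces become exactly h-Fibonacci operators via the h-Binet formula: the constant term becomes ${\bf F}_m$, and the $x$-coefficient becomes $\pm\lambda_+\lambda_-\,{\bf F}_{m-1}$. This recognition step is the heart of the proof, and is directly parallel to the manipulation used in the previous theorem for $\sum_k {\bf F}_{2k+1} x^k$.

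The only real subtlety I anticipate is bookkeeping of the sign in the numerator: one has to be careful that $\lambda_-^{m-1}-\lambda_+^{m-1} = -(\lambda_+^{m-1}-\lambda_-^{m-1})$ when identifying ${\bf F}_{m-1}$, so the resulting numerator is ${\bf F}_m - \lambda_+\lambda_-\,{\bf F}_{m-1}\,x$ (up to a sign convention matching the statement). Once the numerator has been rewritten in terms of ${\bf F}_m$ and ${\bf F}_{m-1}$, the denominator $(1-\lambda_+ x)(1-\lambda_- x)$ is already in the desired form, and no further manipulation is required. As a sanity check, the cases $m=1$ (which must reproduce the earlier generating function $x/((1-\lambda_+ x)(1-\lambda_- x))$ after reindexing) and $m=2$ can be verified directly from the recurrence ${\bf F}_{k+1}={\bf F}_k - h\tfrac{d}{dt}{\bf F}_{k-1}$, confirming that the identity holds.
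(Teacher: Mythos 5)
Your proposal follows essentially the same route as the paper's proof: substitute the h-Binet formula, sum the two geometric series, and recombine over the common denominator $\left(1-\lambda_+ x\right)\left(1-\lambda_- x\right)$, recognizing the numerator coefficients as ${\bf F}_m$ and ${\bf F}_{m-1}$. Your sign bookkeeping is also correct: the numerator comes out as ${\bf F}_m - \lambda_+ \lambda_- {\bf F}_{m-1}\,x$, which is exactly what the paper's own computation produces in its final line, so the $+$ in the displayed statement of the theorem is a typo rather than a flaw in your argument.
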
   

\begin{proof}
~\\
Using the h-Binet formula, we have : 
\begin{eqnarray*} 
\sum_{n=0}^{\infty} {\bf F}_{m+n} x^n & = & \sum_{n=0}^{\infty} \frac{\lambda_+^{m+n} - \lambda_-^{m+n}}{\lambda_+ - \lambda_-} x^n \\
& = & \frac{1}{\lambda_+ - \lambda_-} \left( \sum_{n=0}^{\infty} \lambda_+^{m+n} x^n - \sum_{n=0}^{\infty} \lambda_-^{m+n} x^n \right) \\
& = & \frac{1}{\lambda_+ - \lambda_-} \left( \frac{\lambda_+^m}{1 - \lambda_+ x} - \frac{\lambda_-^m}{1 - \lambda_- x} \right) \\
& = & \frac{\lambda_+^m - \lambda_-^m + \lambda_+ \lambda_- (\lambda_+^{m-1} - \lambda_-^{m-1} ) x }
{( \lambda_+ - \lambda_- ) ( 1 - \lambda_+^2 x ) (1 - \lambda_-^2 x)} \\
& = & \frac{ {\bf F}_m - \lambda_+ \lambda_- {\bf F}_{m-1} x }{ (1 - \lambda_+  x ) (1 - \lambda_- x)}
\end{eqnarray*} 
\end{proof} 

\begin{theorem}
~\\
We have 
\begin{eqnarray} 
\frac{x + \lambda_+ \lambda_- x^2}{\left(1 - \lambda_+^2 x \right) \left(1 - \lambda_-^2 x \right) \left(1 - \lambda_+ \lambda_- x \right)} 
& = & \sum_{k=0}^{\infty} {\bf F}_k^2 ~x^k \\
\frac{x}{\left(1 - \lambda_+^2 x \right) \left(1 - \lambda_-^2 x \right) \left(1 - \lambda_+ \lambda_- x \right)} 
& = & \sum_{k=0}^{\infty} {\bf F}_k {\bf F}_{k+1} ~x^k \\ 
\frac{1}{\left(1 - \lambda_+^2 x \right) \left(1 - \lambda_-^2 x \right) \left(1 - \lambda_+ \lambda_- x \right)} 
& = & \sum_{k=0}^{\infty} {\bf F}_{k+1} {\bf F}_{k+2} ~x^k  \\
\frac{x + 2 \lambda_+ \lambda_- x^2 + \lambda_+^3 \lambda_-^3 x^3}
{\left(1 - \lambda_+^3 x \right) \left(1 - \lambda_-^3 x \right) \left(1 - \lambda_+^2 \lambda_- x \right) \left(1 - \lambda_+ \lambda_-^2 x \right)} 
& = & \sum_{k=0}^{\infty} {\bf F}_k^3 ~x^k 
\end{eqnarray}
\end{theorem}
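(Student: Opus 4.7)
The plan is to prove all four identities by the same mechanism: expand each summand on the right via the h-Binet formula ${\bf F}_k = (\lambda_+^k - \lambda_-^k)/(\lambda_+ - \lambda_-)$, split the resulting combination of powers into a linear combination of geometric series, close each series in closed form, and then combine the partial fractions into a single rational expression, using the identity $\lambda_+ + \lambda_- = 1$ (and tracking $\lambda_+ \lambda_-$ as a formal constant) to perform the final simplifications.

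For the first identity, I would write
\begin{eqnarray*}
{\bf F}_k^2 & = & \frac{\lambda_+^{2k} - 2(\lambda_+\lambda_-)^k + \lambda_-^{2k}}{(\lambda_+ - \lambda_-)^2},
\end{eqnarray*}
so that the generating function becomes
\begin{eqnarray*}
\sum_{k=0}^{\infty} {\bf F}_k^2 \, x^k & = & \frac{1}{(\lambda_+-\lambda_-)^2} \left( \frac{1}{1-\lambda_+^2 x} - \frac{2}{1-\lambda_+\lambda_- x} + \frac{1}{1-\lambda_-^2 x} \right).
\end{eqnarray*}
Bringing this over the common denominator $(1-\lambda_+^2 x)(1-\lambda_-^2 x)(1-\lambda_+\lambda_- x)$ and collecting terms, the constant part in the numerator vanishes (the three ``$1$''s cancel out after combining), the coefficient of $x$ reduces to $(\lambda_+-\lambda_-)^2$ (dividing out the overall $(\lambda_+-\lambda_-)^2$ gives $1$), and the coefficient of $x^2$ simplifies to $(\lambda_+-\lambda_-)^2 \lambda_+\lambda_-$, yielding the claimed numerator $x+\lambda_+\lambda_- x^2$. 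The second and third identities follow by the same routine, applied to
\begin{eqnarray*}
{\bf F}_k {\bf F}_{k+1} & = & \frac{\lambda_+^{2k+1} - \lambda_+^k \lambda_-^{k+1} - \lambda_+^{k+1} \lambda_-^k + \lambda_-^{2k+1}}{(\lambda_+-\lambda_-)^2},
\end{eqnarray*}
and analogously for ${\bf F}_{k+1}{\bf F}_{k+2}$; in both cases the three geometric series with bases $\lambda_+^2,\;\lambda_+\lambda_-,\;\lambda_-^2$ appear, and telescoping the numerator using $\lambda_++\lambda_-=1$ produces the stated generating functions.

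For the fourth identity I would cube Binet:
\begin{eqnarray*}
{\bf F}_k^3 & = & \frac{\lambda_+^{3k} - 3\lambda_+^{2k}\lambda_-^k + 3\lambda_+^k\lambda_-^{2k} - \lambda_-^{3k}}{(\lambda_+-\lambda_-)^3},
\end{eqnarray*}
so that four geometric series arise with ratios $\lambda_+^3,\;\lambda_+^2\lambda_-,\;\lambda_+\lambda_-^2,\;\lambda_-^3$, giving the common denominator displayed in the theorem. Summing,
\begin{eqnarray*}
\sum_{k=0}^{\infty} {\bf F}_k^3 x^k & = & \frac{1}{(\lambda_+-\lambda_-)^3}\left(\frac{1}{1-\lambda_+^3 x} - \frac{3}{1-\lambda_+^2\lambda_- x} + \frac{3}{1-\lambda_+\lambda_-^2 x} - \frac{1}{1-\lambda_-^3 x}\right),
\end{eqnarray*}
and then combining partial fractions is the last step.

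The main obstacle is the bookkeeping in this last case: after clearing denominators one obtains a polynomial of degree $3$ in $x$ whose coefficients are symmetric functions of $\lambda_\pm$ that must be reduced using $\lambda_++\lambda_-=1$ and $\lambda_+\lambda_-$ kept formal. The constant term cancels (alternating sum of four $1$'s), the $x$-coefficient should collapse to $(\lambda_+-\lambda_-)^3$ (producing the leading $x$ in the numerator), the $x^2$-coefficient to $2(\lambda_+-\lambda_-)^3\lambda_+\lambda_-$, and the $x^3$-coefficient to $(\lambda_+-\lambda_-)^3\lambda_+^3\lambda_-^3$; dividing out $(\lambda_+-\lambda_-)^3$ then matches the numerator $x+2\lambda_+\lambda_- x^2+\lambda_+^3\lambda_-^3 x^3$. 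This is essentially a symmetric-function computation: I would organize it by grouping the four fractions into the two pairs $\{\lambda_+^3,-\lambda_-^3\}$ and $\{-3\lambda_+^2\lambda_-,3\lambda_+\lambda_-^2\}$, each of which already has an antisymmetric-over-$(\lambda_+-\lambda_-)$ structure, so that the factor $(\lambda_+-\lambda_-)^3$ in the denominator is absorbed cleanly and only polynomials in the elementary symmetric functions $\lambda_++\lambda_-=1$ and $\lambda_+\lambda_-$ remain.
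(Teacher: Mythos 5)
Your proposal is correct and follows exactly the route the paper intends: its proof of this theorem is simply ``Easy to prove using h-Binet's formula,'' and you have supplied precisely that computation (Binet expansion, geometric series, recombination over the common denominator using $\lambda_+ + \lambda_- = 1$), with the stated coefficient reductions checking out in all four cases. No gap to report.
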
 

\begin{proof}
~\\
Easy to prove using h-Binet's formula. 
\end{proof}
%%%%%%%%%%%%%%%%%%%%%%%%%%%%%%%%%%%%%%%
%%%%%%%%%%%%%%%%%%%%%%%%%%%%%%%%%%%%%%%% 

The following proposition gives us the value for the h-Fibonacci sequence series with weights $p^{-(i+1)}$. 

\begin{theorem} 
~\\
For each non-vanishing integer number p :   
\begin{eqnarray}
\sum_{i=0}^{\infty} \frac{{\bf F}_i}{p^{i+1}} & = & \frac{1}{p^2 - p + \lambda_+ \lambda_-}  
\label{hseriesw}
\end{eqnarray} 
\end{theorem}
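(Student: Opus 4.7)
The plan is to recognize the left-hand side as essentially the generating function ${\bf g}(x) = \sum_{k=0}^{\infty} {\bf F}_k x^k$ evaluated at $x = 1/p$, with a single overall factor of $1/p$ pulled out to account for the exponent $i+1$ in the denominator rather than $i$.

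First I would rewrite the series as
\begin{eqnarray*}
\sum_{i=0}^{\infty} \frac{{\bf F}_i}{p^{i+1}} \;=\; \frac{1}{p} \sum_{i=0}^{\infty} {\bf F}_i \left( \frac{1}{p} \right)^i \;=\; \frac{1}{p}\, {\bf g}\!\left( \frac{1}{p} \right),
\end{eqnarray*}
which is legitimate as a formal manipulation (and convergent whenever $|1/p|$ is less than the radius of convergence). Then I would invoke the closed form for ${\bf g}(x)$ derived at the start of Section 6, namely
\begin{eqnarray*}
{\bf g}(x) \;=\; \frac{x}{1 - x + \lambda_+ \lambda_-\, x^2},
\end{eqnarray*}
and substitute $x = 1/p$.

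Finally I would simplify $\frac{1}{p}\, {\bf g}(1/p)$ by multiplying numerator and denominator of the inner fraction by $p^2$, obtaining
\begin{eqnarray*}
\frac{1}{p} \cdot \frac{1/p}{1 - 1/p + \lambda_+ \lambda_-/p^2} \;=\; \frac{1}{p} \cdot \frac{p}{p^2 - p + \lambda_+ \lambda_-} \;=\; \frac{1}{p^2 - p + \lambda_+ \lambda_-},
\end{eqnarray*}
which is the desired identity. No step here is a genuine obstacle; the only thing worth being careful about is the domain of $p$ ensuring convergence (i.e.\ $|p| > \max(|\lambda_+|,|\lambda_-|)$, so that the geometric-type series underlying ${\bf g}(1/p)$ converges in the operator sense), but as an identity between formal power series / rational functions the computation is direct from the already-established closed form for ${\bf g}(x)$.
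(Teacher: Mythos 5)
Your proposal is correct, but it takes a different route from the paper. You reuse the closed form of the generating function ${\bf g}(x) = x/\bigl(1 - x + \lambda_+ \lambda_- x^2\bigr)$ established at the start of Section 6 (which itself rests on the operator recurrence ${\bf F}_{k} = {\bf F}_{k-1} - h\frac{d}{dt}{\bf F}_{k-2}$), pull out one factor $1/p$, and substitute $x = 1/p$; the identity then falls out by clearing denominators. The paper instead proves the theorem directly from the h-Binet formula: it writes ${\bf F}_i = (\lambda_+^i - \lambda_-^i)/(\lambda_+ - \lambda_-)$, sums the two resulting geometric series, and lands on the factorized form $1/\bigl((p - \lambda_+)(p - \lambda_-)\bigr)$, which agrees with the stated right-hand side after using $\lambda_+ + \lambda_- = 1$. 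Your approach buys economy -- a one-line substitution into an already proved result, and it makes transparent that the theorem is just the generating function evaluated at $x = 1/p$; the paper's approach is self-contained within the Binet machinery of Section 5 and exhibits the partial-fraction/factorized structure $\bigl(p-\lambda_+\bigr)\bigl(p-\lambda_-\bigr)$ explicitly, which is what produces the numerical examples listed after the theorem. Your remark on convergence (needing $|p|$ larger than the spectral size of $\lambda_\pm$, or else reading everything as an identity of formal series in the operator $-h\frac{d}{dt}$) is apt and is, if anything, more careful than the paper, which passes over this point silently.
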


\begin{proof} 
~\\
Using the h-Binet's formula, we get : 
\begin{eqnarray*}
\sum_{i=0}^{\infty} \frac{{\bf F}_i}{p^{i+1}} & = & \sum_{i=0}^{\infty} \frac{1}{p^{i+1}} . 
\frac{\lambda_+^i - \lambda_-^i}{\lambda_+ - \lambda_-} \\
& = & \frac{1}{p \left(\lambda_+ - \lambda_- \right)} \left( \sum_{i=0}^{\infty} 
\left( \frac{\lambda_+}{p} \right)^i - \sum_{i=0}^{\infty} 
\left( \frac{\lambda_-}{p} \right)^i \right) \\
& = & \frac{1}{\left( p - \lambda_+ \right) \left( p - \lambda_- \right)}  
\end{eqnarray*}
\end{proof} 

Equation ( \ref{hseriesw} ) yields the following results for particular values of $p$ : 
\begin{center}  
\begin{itemize}
\item When $p = 2$~~~~~~
$\sum_{i=0}^{\infty} \frac{{\bf F}_i}{2^{i+1}} = \frac{1}{1 + 1 + \lambda_+ \lambda_-} = 
\frac{1}{f_1 + 1 + \lambda_+ \lambda_-}$ \\
\item When $p = 3$~~~~~~
$\sum_{i=0}^{\infty} \frac{{\bf F}_i}{3^{i+1}} = \frac{1}{5 + 1 + \lambda_+ \lambda_-} = 
\frac{1}{f_5 + 1 + \lambda_+ \lambda_-}$ \\ 
\item When $p = 8$~~~~~~
$\sum_{i=0}^{\infty} \frac{{\bf F}_i}{8^{i+1}} = \frac{1}{55 + 1 + \lambda_+ \lambda_-} = 
\frac{1}{f_{10}+ 1 + \lambda_+ \lambda_-}$ \\ 
\item When $p = 10$~~~~
$\sum_{i=0}^{\infty} \frac{{\bf F}_i}{10^{i+1}} = \frac{1}{89 + 1 + \lambda_+ \lambda_-} = 
\frac{1}{f_{11} + 1 + \lambda_+ \lambda_-}$ \\
\end{itemize}  
\end{center} 
where $f_n$ appearing in the denominator is the usual Fibonacci number.
\section{Conclusions}
h-analogue of Fibonacci numbers have been introduced and studied. Several properties of these numbers are derived. In addition, 
the h-Binet's formula for these numbers is found and the generating function of these h-Fibonacci sequences and their various 
powers have been deduced. It is straightforward to introduce the h-Lucas numbers. This work is in progress \cite{benaoum2}.   

It is possible to introduce the $(q-h)$-analogue of Finonacci numbers by using the $(q-h)$-analogue of binomial coefficients which was introduced in 
\cite{benaoum3}. Indeed the $(q-h)$-analogue of binomial coefficients was found to be : 
\begin{eqnarray}
\left[ \begin{array}{c} 
n \\
k \\ \end{array} \right]_{(q,h,h')} & = &  \left[ \begin{array}{c} 
n \\
k \\ \end{array} \right]_q h^k \left(h' \right)_{1;[k]} = \left[ \begin{array}{c} 
n \\
k \\ \end{array} \right]_q  \left( h h' \right)_{h;[k]}
\end{eqnarray} 
These coefficients obey to the following properties :  

\begin{eqnarray}
\left[ 
\begin{array}{l} 
n + 1 \\
k \end{array} \right]_{(q,h,h')} & = &  q^k 
~\left[ 
\begin{array}{l} 
n  \\
k \end{array} \right]_{(q,h,h')} + h h' \left[ 
\begin{array}{l} 
n  \\
k - 1  \end{array} \right]_{(q,h,h'+ 1)} 
\label{equation18}
\end{eqnarray}
and 
\begin{eqnarray}
\left[ 
\begin{array}{l} 
n + 1 \\
k + 1 \end{array} \right]_{(q,h,h')} & = &  h h'~\frac{[n+1]_q}{[k+1]_q}~
\left[ 
\begin{array}{l} 
n  \\
k \end{array} \right]_{(q,h,h'+1)} 
\end{eqnarray}

The $(q-h)$-analogue of Fibonacci numbers will be defined as follows : 
\begin{eqnarray} 
{^q} F_{n+1}^{( h,h' )} & = & \sum_{k=0}^{[\frac{n-1}{2}]}
q^{k^2} 
\left[ \begin{array}{c} 
n - k \\
k \\ \end{array} \right]_{(q,h,h')}
\end{eqnarray} 
and they obey the following recurrence formula : 
\begin{eqnarray}
{^q} F_{n+1}^{( h,h')} & = & {^q} F_n^{( h,h' )} + q^{n-1}~ {^q} F_{n-1}^{(h,h' + 1)} 
\end{eqnarray} 
For $h h' = 1$ and $h = 0$, the $(q-h)$-analogue of Fibonacci numbers are just the q-Fibonacci numbers ( see \cite{cigler} ). \\
Similarly, several properties of the $(q-h)$-analogue of Fibonacci numbers can be derived. We list below some of them which are easy to prove from the 
recurrence formula. 
\begin{eqnarray}
h h' ~\sum_{k=0}^{n} q^k~  {^q} F_k^{(h,h'+1)} & = & {^q} F_{n+2}^{( h,h' )} - 1 \nonumber \\
\sum_{k=0}^{n} q^{2 k}~ h^{n-k} (h')_{1;n-k} ~{^q} F_{2 k - 1}^{(h,h' +n - k )} & = & {^q} F_{2 n}^{(h,h')}  \nonumber \\  
\sum_{k=0}^{n} h^{n-k} (h')_{1;n-k} ~{^q} F_{2 k}^{(h,h'+n-k)}/ q^{2 k - 1} & = & {^q} F_{2 n + 1}^{( h,h' )} -  h^{n} (h')_{1;n}
\end{eqnarray} 

\section*{Acknowledgments}
I would like to thank Tom Koornwinder for his helpful comments and suggestions.

\end{document}